%
\documentclass[preprint, authoryear, 11pt]{elsarticle}

\usepackage[titletoc,title]{appendix}
\usepackage{natbib}
\usepackage{amsfonts, amsmath}
\usepackage{bm}
\usepackage{setspace}
\usepackage{fullpage}
\usepackage{enumerate}
\usepackage{amsmath}%
\usepackage{amsfonts}%
\usepackage{amssymb}%
\usepackage{booktabs}
\usepackage{natbib}

\newcommand{\E}{\mathop{\mathbb E}}

\newcommand{\n}{\normalsize}
\newcommand{\s}{\scriptsize}

\newtheorem{theorem}{Theorem}

\newtheorem{corollary}[theorem]{Corollary}

\newtheorem{example}{Example}

\newtheorem{lemma}[theorem]{Lemma}

\newtheorem{proposition}[theorem]{Proposition}
\newtheorem{remark}{Remark}

\newproof{proof}{Proof}

\begin{document}
\title{\textsc{Strategy-Proof Incentives for Predictions\tnoteref{wine}}}
\tnotetext[wine]{A short version of this paper was presented in WINE'18 and published in proceedings}

\author{Amir Ban}
\ead{amirban@netvision.net.il}

\address{Faculty of Mathematics and Computer Science,
Weizmann Institute of Science,
Rehovot, Israel}

\begin{abstract}
Our aim is to design mechanisms that motivate all agents to reveal their predictions truthfully and promptly. For myopic agents, proper scoring rules induce truthfulness. However, as has been described in the literature, when agents take into account long-term effects of their actions, deception and reticence may appear. No simple rules exist to distinguish between the truthful and the untruthful situations, and a determination has been done in isolated cases only.

This is of relevance to prediction markets, where the market value is a common prediction, and more generally in informal public prediction forums, such as stock-market estimates by analysts.
We describe three mechanisms that are strategy-proof with non-myopic considerations, and show that one of them meets all our requirements from a mechanism in almost all prediction settings. We formulate rules to distinguish truthful from untruthful settings, and use them to extensively classify prediction settings with continuous outcomes. We show how our proposed mechanism restores prompt truthfulness where incumbent mechanisms fail, and offer guidelines to implementing it in a prediction market.

\end{abstract}

\maketitle

\section{Introduction}

Mechanisms that motivate all agents to reveal their information truthfully and promptly are desirable in many situations.

Consider, for example, the estimation of company earnings by stock-market analysts, a longstanding Wall Street institution. Publicly-traded companies announce their earnings for the latest quarter or year, on dates set well in advance. Each company is typically covered by several stock-market analysts, the larger ones by dozens. These analysts issue reports containing predictions of a company's future earnings. The timing of such predictions may range from several years to days before earnings announcement, and every analyst typically updates his prediction several times in the interval. These predictions eventually become publicly available, 
 and a consensus calculated from all predictions in force may be viewed on several popular finance websites. Not least, the analysts themselves are aware of, and are no doubt influenced by the actions and opinions of their peers.

In essence, this earnings estimation functions as a public prediction forum with an evolving consensus, that terminates when a company announces its true earnings for the forecast period, which we call the {\em outcome}. It acts as a sort of advisory forum for the public of investors, and this public's interest is best served if analysts share their information and judgement truthfully and promptly.

Prediction markets are public prediction forums organized as markets. A de-facto standard for organizing prediction markets is due to \cite{Hanson03}, using a {\em market scoring rule}. In such a market, probability estimates are rewarded by a {\em proper scoring rule} an amount $S(\bm{p},r)$, where $\bm{p}$ is a probability distribution of the outcome, and $r$ is the outcome. A trader in Hanson's markets not only makes her probability estimate public, she changes the market price to it. She then stands to be rewarded by the {\em market maker} for her prediction (when the outcome becomes known), but she also commits to compensate the {\em previous} trader for his prediction. Her total compensation is therefore the difference $S(\bm{p},r) - S(\bm{p'},r)$ where $\bm{p'}$ is the replaced market prediction. When the logarithmic scoring rule ($S(\bm{p},r) = \log p_r$)\footnote{We use the notation $p_x$ for the density of distribution $\bm{p}$ at $x$.} is used in a prediction market, the mechanism is called LMSR (Logarithmic Market Scoring Rule). Hanson also demonstrated how a market maker can facilitate, at a bounded cost to himself, such an LMSR market and provide liquidity by selling and buying shares of each outcome.

Proper scoring rules, are, by their definition, incentive compatible for {\em myopic} agents\footnote{The incentive compatibility is also restricted to {\em risk-neutral} agents. \cite{offerman2009truth} demonstrate how proper scoring rules may be corrected for other risk attitudes.}. That is, an agent maximizes her expected score by announcing her true belief, provided longer-term effects of the prediction, if any, are ignored. The incremental, market variation of the scoring rule does not affect this incentive compatibility, because the previous agent's score does not depend on the current prediction. Furthermore, it is a straightforward generalization of \cite{Hanson03} to apply market scoring rules to multiple-choice, or continuous outcomes (such as in our earnings estimate forum).

When an agent is {\em not} myopic, and {\em does} take into account all consequences of her action, truthfulness will, in many cases, {\em not} be her optimal strategy, and incentive compatibility is lost. Such scenarios have been described in the literature, and our paper adds many further examples. As we will show, the damage to incentive compatibility caused by long-term strategic considerations is extensive, and it is {\em a priori} unclear whether any remedy is available.

Our aim is the design of mechanisms for rewarding predictions that are strategy-proof. We demonstrate the problem with proper scoring rules, as often leading to reticence or deception. We formulate criteria for determining which prediction settings are truthful, and apply these criteria for
a complete classification of the important class of prediction settings with normal and lognormal signals. We suggest three strategy-proof mechanisms, and identify one of them, {\em discounting}, as having all desirable properties. We prove the applicability and effectiveness of the discounting mechanism, and offer guidelines to implementing it.

\subsection{The Problem with Scoring Rules}
\label{theproblem}
A scoring rule $S: \Delta(\mathbb{R}) \times \mathbb{R} \mapsto \mathbb{R}$ scores a prediction $\bm{p}$, representing a probability distribution of the outcome, a value $S(\bm{p}, r)$ when the outcome is $r$. An agent whose belief of the outcome distribution is $\bm{q}$ has score expectation $S(\bm{p}, \bm{q}) := \E_{r \sim \bm{q}} S(\bm{p}, r)$ for prediction $\bm{p}$. A {\em proper} scoring rule is one for which $S(\bm{q},\bm{q}) \geq S(\bm{p},\bm{q})$ for every $\bm{p}, \bm{q} \in \Delta(\mathbb{R})$, so that predicting one's true belief has maximal score expectation. A {\em strictly proper} scoring rule is one where the inequality is tight only for $\bm{p} = \bm{q}$. The {\em logarithmic} scoring rule $S(\bm{p}, r) = \log p_r$, and the {\em quadratic} (a.k.a. {\em Brier}) scoring rule $S(\bm{p}, r) = 2 p_r - \bm{p} \cdot \bm{p} - 1$ are examples of strictly proper scoring rules. More background on scoring rules may be found, e.g., in \cite{Gneiting07}.

The following generic example illustrates the problem when non-myopic considerations apply.
\begin{example}
\label{problem}
The public wants to predict a variable, whose outcome is $x$. A market scoring rule compensates predictions.
Every signal of $x$ is, i.i.d., $x + \epsilon$ with probability 1/2, and $x - \epsilon$ with probability 1/2. The distributions of $x$ and $\epsilon$ are such that a single signal does not reveal $x$. There is an expert, who gets private signals. The public gets public signals.
\begin{itemize}
\item On Sunday, expert gets a signal.
\item On Monday, public gets a signal.
\item On Tuesday, expert gets another signal. 
\item On Wednesday, outcome $x$ is revealed.
\end{itemize}

Question: Should expert reveal his information truthfully on Sunday?

Answer: \textsc{No}. Whoever sees two different truthful signals is able to calculate the outcome $x = (x + \epsilon)/2 + (x - \epsilon)/2$ exactly.
For any distribution of $\epsilon$, and for the logarithmic and almost\footnote{The example is true for the logarithmic scoring rule because its scores are unbounded. Every scoring rule that values exact predictions over inexact ones sufficiently will do.} every other scoring rule, the expert should not tell the truth on Sunday. This prevents the $50\%$ probability that the market will know $x$ on Monday, preserving a $75\%$ probability that the expert can announce $x$ on Tuesday.
\end{example}

The canonical case, to which this example belongs, is ``Alice-Bob-Alice'', where Alice speaks before and after Bob's single speaking opportunity, both are awarded by a proper scoring rule for each prediction, and both maximize their total score. \cite{Chen2007} as well as \cite{chen2016informational} studied situations where several agents, each having private information, are given more than one opportunity to make a public prediction. The situations are reducible to the Alice-Bob-Alice game. The proper scoring rule assures that each will tell the truth on their last prediction, and the open question is whether Alice, when going first, will tell the truth, lie, or keep her silence. \cite{Chen2007} make the key observation that truthfulness is optimal if, in a different setup, namely, a single-prediction Alice-Bob game where Alice chooses whether to go first or second, she will always prefer going first. Building on that insight, \cite{chen2016informational} show that when the players' information is what they define as ``perfect informational substitutes'', they will predict truthfully and as early as allowed, when they are ``perfect informational complements'', they will predict truthfully and as {\em late} as allowed, while when players are neither substitutes nor complements, untruthfulness can and will occur. While this characterization is helpful, few concrete cases have been settled. The most significant of those was to show that when signals are independent conditional on the outcome, and the logarithmic scoring rule (LMSR) is used, the signals are informational substitutes, meaning that in such a case, Alice will reveal all her information truthfully in her first round.

A strategy-proof mechanism of the Alice-Bob-Alice setting easily generalizes to a strategy-proof mechanism for any number of experts and prediction order, because
whenever an expert (call her Alice) makes more than one prediction, one can roll together all experts making predictions between Alice's successive predictions into one expert (call him Bob), who shares their information\footnote{\cite{Chen2007} use the same construction to generalize from Alice-Bob-Alice to a finite-players game.}. This is formally proved in Theorem \ref{general}.

\subsection{Goals of the Mechanism}
\label{goals}

We seek a mechanism with several desirable traits.

\begin{enumerate}
\item \textsc{Truthfulness}: The mechanism should motivate all experts to make truthful predictions, that is, to reveal their true subjective distributions of the outcome. At minimum, this means that truth-telling should be a best-response to truth-telling by all other experts, according to the player's beliefs at the time of prediction, i.e., it is a perfect Bayesian equilibrium.\footnote{Note that the ideal of truth-telling as dominant strategy is not attainable here, because if a player is aware of another player's distortion, the correct Bayesian response is to compensate for the distortion.}

An untruthful mechanism may still be {\em locally truthful} by which we mean that infinitesimal variations from the truth are suboptimal, but telling a sufficiently big lie may be advantageous. 

\item\textsc{Full Disclosure}: Every expert makes a (truthful) prediction some time after getting his last signal. This is a necessary\footnote{But possibly not sufficient, e.g. when experts predict the XOR of their bit signals.} condition for all the information on the outcome possessed in the expert's signals to reach the public.

\item \textsc{Promptness}: Experts should reveal their signals by a truthful public prediction as soon as the prediction schedule allows, and make an updated prediction whenever receiving a new signal, again, at the earliest opportunity. Considerations by experts of when to make a prediction are contrary to the interest of the public. We shall require a {\em strict} preference for promptness. Indifference to timing shall not count as prompt.

\item \textsc{Bounded Loss}: \cite{Hanson03} notes that in his market scoring rule mechanism, the market maker effectively subsidizes traders to motivate their truthfulness, and he shows that market maker's expected loss due to that is {\em bounded}: Since the market maker initializes the prediction (to some prior $\bm{\pi}$), and needs to compensate only the last prediction (as every trader compensates the previous one), his cost (=loss) expectation is, for non-positive scoring rules (such as LMSR)
\begin{align}
\label{loss}
\E[Loss] = S(\bm{p}_{final},\bm{\pi}) - S(\bm{\pi},\bm{\pi}) \leq  - S(\bm{\pi},\bm{\pi})
\end{align}

We seek mechanisms that have bounded loss. Note that scoring rules can be arbitrarily scaled without effect on their motivation, so reducing the  {\em size} of a bounded loss is not a goal.

\end{enumerate}

\subsection{Our Results}

We propose three different incentive mechanisms, all of which are based on proper scoring rules, all of which achieve truthfulness, and the third and last also achieves promptness. They are

\begin{enumerate}
\item \textsc{Group prediction}: All agents receive the final prediction's (non-incremental) score. Since all agents have a stake in the final prediction, all will reveal their information truthfully. On the negative side, they are not motivated to be prompt about it. Another problem is freeloaders, since agents with no information can participate and gain without contributing anything.

\item \textsc{Enforce single prediction}: Score each of agent's prediction with an incremental scoring rule, and award each agent the minimum score. Agents are therefore motivated to predict once only, since having made a prediction, a further prediction can only lower their reward expectation. With a proper scoring rule, this assures incentive compatibility with truthfulness. Agents are not motivated to be prompt, but instead need to find the optimal timing to make their single prediction. A major drawback is that when agents receive a time-varying signal, they will not reveal all their information.

\item \textsc{Discounting}: Discount each of agent's incremental prediction scores by a monotonically increasing factor of the time. The idea is that if signals are not informational substitutes, they will become ones if a sufficiently steep negative time gradient is applied. When successful, this mechanism achieves the ideal result of motivating all agents to reveal their information truthfully and promptly, including when they receive time-varying signals.

We show that, under some light conditions, discounting will always work unless signals are perfectly correlated, i.e., have a deterministic relation given the outcome (as in Example \ref{problem}).

\end{enumerate} 

Table \ref{mechanism-score-card} summarizes how the incumbent mechanism, and our three proposed mechanisms, measure up against each of the traits we described as desirable in the previous section.

We provide (Corollary \ref{min-truth}) a criterion for the truthfulness of the Alice-Bob-Alice game, as well as sufficient conditions for truthfulness and for untruthfulness (Corollary \ref{every-b}).

We put these to use to thoroughly investigate the Alice-Bob-Alice game (and, by extension, multi-player, multi-signal games) with both the logarithmic and the quadratic (Brier) scoring rule, when player signals have a 
multivariate normal distribution or a multivariate lognormal distribution. 

These distributions are among the best-known continuous distributions and naturally arise in many situations. 
They are characterized, {\em inter alia}, by the correlation coefficient ($\rho \in [-1,1]$) between Alice's and Bob's signals. When the scoring rule used is the logarithmic one, we find that when these signals are too well-correlated (whether positively or negatively), prompt truthfulness is {\em not} optimal. On the other hand, if the correlation is low, the game will be truthful and prompt. This includes the case $\rho = 0$, where, as is well-known, the signals are conditionally independent, confirming the \cite{Chen2007} result for conditionally independent signals.

However, when the quadratic scoring rule, one of the oldest and most commonly used scoring rules, is used with these multivariate distributions, it is {\em never} truthful for repeated predictions.

In all settings with either the logarithmic or the quadratic scoring rules, we show that our discounting mechanism restores prompt truthfulness, with the single exception of perfect correlation ($|\rho| = 1$) of the players' signals.


We show that the discounting mechanism can be effectively implemented with an automated market maker, as in Hanson's markets, thus showing that it may be practically applied in prediction markets. In the Discussion, we offer guidelines on how to design such a mechanism.

%
%
%

\subsection{Related Literature}

Scoring rules have a very long history, going back to \cite{Finetti37}, \cite{Brier50} and \cite{Good52}. Proper scoring rules are often used for incentive-compatible belief elicitation of risk-neutral agents (e.g. \cite{armantier2013eliciting}). Market scoring rules for prediction markets were introduced by \cite{Hanson03}.

The role of \cite{Chen2007} (which is based on earlier papers by \cite{chen2007bluffing} and \cite{dimitrov2008non}) and \cite{chen2016informational} in investigating the strategy-proofness of prediction markets was already described. \cite{gao2013you} and \cite{kong2018optimizing} resolve some more scenarios. \cite{conitzer2009prediction} embarks on a program similar to ours, citing mechanism design as a guiding principle. Accordingly, he strives to achieve the Revelation Principle, where all experts announce their private information to some organizing entity that makes the appropriate Bayesian aggregation. As we discuss in Section \ref{knowledge} below, we do not share that vision: Experts often do not know what part of their belief stems from truly private information, and even when they do, they cannot afford to go on record with a prediction which is not their best judgement. His ``Group-Rewarding Information Mechanism'' is similar to our Group Prediction mechanism, and its lack of fairness is pointed out. Conitzer does not propose a mechanism that achieves prompt truthfulness.

\cite{Chen2007} also suggest discounting, that ``reduces the opportunity for bluffing'', in their words, but does not prevent it (Section 9.1), so their discounting mechanism does not achieve our basic requirement of truthfulness. The reason is that their formulation is different from ours, applying same discount to {\em before} and {\em after} scores. On the other hand, we discount every prediction score according to the time its prediction was made. The difference is crucial, because theirs does not result in a true market scoring rule, as defined by \cite{Hanson03}. In consequence, our Section \ref{maximizing}, on which our results rest, as well as our Section \ref{market-maker}, do not apply to their formulation.

We shall occasionally rely on well-known facts of the normal and the multivariate normal distributions. The reader will find the basis for these in, e.g., \cite{Tong}.


The rest of this paper is organized as follows: In Section \ref{formulation} we formulate the problem. In Section \ref{already} we investigate which predictions settings are already truthful and prompt. In Section \ref{mechanisms} we offer strategy-proof mechanisms, and show how they can solve the gaps we have found. In Section \ref{discussion} we summarize and offer concluding remarks. Most proofs are to be found in the Appendix.

\section{Problem Formulation}
\label{formulation}

\subsection{Basics}

Two players, Alice and Bob, make public predictions of a real variable $\lambda$, whose prior distribution is $\bm{\pi}$. The {\em outcome} $x$ will be revealed after all predictions have been made. A prediction consists of revealing one's belief of the distribution of $\lambda$.  All agents (Alice, Bob, and the public) are Bayesian, and each prediction causes them to update their beliefs, i.e. the posterior distribution, of $\lambda$.\footnote{This formulation is different from the mechanism of prediction markets, but equivalent to it. In prediction markets, an agent replaces the current market prediction by his own. In our formulation, the agent merely announces a prediction, which, assuming the agent is truthful, becomes the market prediction by Bayesian inference. This is because all rational agents reach the same beliefs from the same data.}

In a truthful equilibrium, truthfulness is a best response of a player to truthfulness by all others. So we start by assuming that all agents take others' predictions as truthful, and investigate whether the best response is to be truthful.

The posterior beliefs are distributions of the variable $\lambda$ which are inferred from priors and likelihood functions using Bayesian techniques. In our discussion we find it more convenient and succinct to represent beliefs, without loss of generality, by a real number, rather than by a probability distribution. We use the fact that, in Bayesian analysis, when the likelihood functions belong to some family of distributions (e.g. exponential), all posterior beliefs belong to another family of distributions (Gamma distribution for the exponential family) $Q(\bm{Y}) \in \Delta(\lambda)$, called the {\em conjugate prior} of the first family. $\bm{Y}$ is a set of real {\em parameters} of the inferred distribution $Q$. We will assume models where one, and only one of these parameters is dependent on previous predictions, while the rest $\bm{Y} \setminus \{x\}$ is known from the model, the timing and the identity of the believer, but {\em does not depend on any previous prediction}. An example illustrates this:

\begin{example}
Assume Alice's belief of $\lambda$ to be normally distributed $N(\mu_A, 1 / \tau_A)$ where $\mu_A$ is the mean and $\tau_A$ the accuracy (i.e. inverse of the variance), and Bob's is $N(\mu_B, 1 / \tau_B)$ and independent of Alice's. $\tau_A$ and $\tau_B$ are set by the model and are commonly known. Assume an uninformative prior. Using well-known aggregation rules for independent normal observations, if Alice announces $\mu_A$, Bob's belief changes to the normally distributed $N(\mu_{AB},1 / \tau_{AB})$, where
\s\begin{align*}
\mu_{AB} &= \frac{\tau_A \mu_A + \tau_B \mu_B}{\tau_A + \tau_B} \\
\tau_{AB} &= \tau_A + \tau_B
\end{align*}\n

Notice that $\tau_{AB}$ can be calculated without knowing any of the means $\mu_A, \mu_B$, while $\mu_{AB}$ can be evaluated once $\mu_A$ and $\mu_B$ is known.

In this context we are therefore able to describe a prediction by a single real number (the mean) rather than by a probability distribution. We shall say that Alice's prior belief $A_1$ is $\mu_A$ and Bob's prior belief $B_1$ is $\mu_B$. After Alice makes her prediction, Bob's belief changes to $\mu_{AB}$. After Alice and Bob both make a prediction, the public's belief is $\mu_{AB}$. In context, these statements are unambiguously equivalent to specifying the probability distributions in full.
\end{example}

%
The prior $\bm{\pi}$ may be uninformative, assigning equal probabilities to all possibilities\footnote{Technically, an uninformative prior may be envisioned as the limit of a uniform or normal distribution as the variance goes to infinity.}, or, if not, as also representable by a parameter. For example, if Alice and Bob participate in a prediction market, the prior parameter is the market value before Alice's first prediction.

Time is discrete, $t = 0, 1, 2, \ldots, T$, with $T$ the time the outcome is known.  $A_t, B_t , C_t\in \mathbb{R}$ are, respectively, Alice's, Bob's and the public's (or market's) beliefs at time $t$. At $t=1$, $A_1, B_1, C_1$ are their respective prior beliefs. Any prediction takes place at $t > 1$. $t = 0$ is ``pre-prior'' time, when players beliefs are equal to their {\em private} signals, so that $A_0, B_0$ are respectively, Alice and Bob's private signals. At $t=1$, each player is additionally aware of the public prior $C_1$, so that $A_1$ is an inference from $A_0$ and $C_1$ while $B_1$ is an inference from $B_0$ and $C_1$. If the public prior is uninformative, then we have $A_0 = A_1$ and $B_0 = B_1$. In other words, the players' priors equal their private signals. For completeness, we define $C_0 = C_1$.

To avoid degenerate exceptions, we assume the players' signals are {\em informative}. This means that $A_1 \neq C_1$ and $B_1 \neq C_1$.

The signals have a common-knowledge joint distribution $f(a,b;\lambda)$ conditional on the variable
\s\begin{align*}
f(a,b;\lambda) := \Pr(A_1 = a, B_1 = b | \lambda)
\end{align*}\n

The order of predictions is Alice, Bob, then Alice again, and then the outcome $x$ is revealed.

A twice-differentiable, w.r.t. $\lambda$, proper scoring rule $S(\bm{p}, \lambda)$ incrementally rewards each prediction made, i.e., if a player's prediction changed the public's belief from $\bm{p'}$ to $\bm{p}$, the player's reward for this prediction, calculated when $x$ is known, is $S(\bm{p}, x) - S(\bm{p'},x)$. Each player seeks to maximize their total reward.

As the scoring rule is proper, Bob will tell the truth on his only prediction, and Alice will tell the truth on her second and last prediction. The remaining question is whether Alice will tell the truth on her first prediction. More accurately, the question is of equilibrium: If Bob is truthful, and Bob and the public take Alice's predictions as truthful, is truth-telling Alice's best response? If the answer is affirmative when all players use all available information to update their beliefs, we have a truthful Perfect Bayesian Equilibrium.

\subsection{Knowledge Model}
\label{knowledge}

As will be shown, the players behavior is affected by their {\em common-knowledge prior}, by which we mean a belief distribution which is explicitly known to both players and from which each inferred his or her current belief. That the common prior is commonly known is significant, because it is quite possible, and even likely, that the players share a prior but do not know what it is. For example, in predicting a poll, Alice and Bob may be basing themselves on knowledge of how their acquaintances voted, but they may not know which acquaintances they have in common. Or, they both may be basing themselves on a paper they read, but neither is aware that the other has read it. If the players do not know what their common prior is, they cannot infer anything from having one. 

In prediction markets, and more generally in public prediction forums where all communication is done in public, the common-knowledge prior is known. In the context of this paper, it is the initial market value $C_1$ (or, equivalently, distribution $\bm{\pi}$), when the Alice-Bob-Alice game starts. 

\subsection{Inferences from Predictions}
\label{inferences}

Assume that inference functions are invertible, so that if a player's prediction is known, her signal can be computed. If Alice announces $A_1 = a$, the posterior outcome distribution can be calculated from her marginal distribution, $f_A(a;\lambda) := \Pr(A_1 = a | \lambda) = \int_{-\infty}^{\infty} f(a, b';\lambda) db'$. Mark it $g(a)$.
\s\begin{align}
g(a)_\lambda = \Pr(\lambda | A_1 = a) = \frac{\pi_\lambda f_A(a;\lambda)}{\int_{-\infty}^{\infty} \pi_{\lambda'} f_A(a;\lambda') d\lambda'}
\end{align}\n

Similarly, if Alice announces $A_1 =  a$, and Bob privately observes $B_1 = b$, Bob's posterior outcome is inferred from $f(a, b; \lambda)$. It will become the public prediction when Bob announces it. Mark it $h(a,b)$.
\s\begin{align}
h(a, b)_\lambda = \Pr(\lambda | A_1 = a, B_1 = b) = \frac{\pi_\lambda f(a, b;\lambda)}{\int_{-\infty}^{\infty} \pi_{\lambda'} f(a, b;\lambda') d\lambda'}
\end{align}\n

\subsection{Maximizing the Reward}
\label{maximizing}

How does Alice maximize her total reward for both her predictions? And is this maximum achieved by telling the truth on both predictions? We will show that Alice maximizes her reward by {\em minimizing} Bob's reward, and therefore is truthful if truth minimizes Bob's reward.

\begin{theorem}
\label{minimize}
Alice maximizes her expected total reward by making a first prediction that minimizes Bob's expected reward, where expectations are taken according to Alice's beliefs on her first prediction, and assuming Bob is truthful and takes Alice's prediction as truthful.
\end{theorem}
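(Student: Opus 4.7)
The plan is to exploit the telescoping structure of the incremental market-scoring rewards. Denote Alice's (possibly untruthful) first announcement by $\tilde{A}$. Assuming Bob is truthful and takes Alice at her word, and that Alice is truthful on her second and final prediction, the public belief evolves along the chain $C_1 \to g(\tilde{A}) \to h(\tilde{A}, B_1) \to A_2'$. My first task is to argue $A_2' = h(A_1, B_1)$, independent of $\tilde{A}$. Indeed, Alice observes Bob's announcement $h(\tilde{A}, B_1)$; by the invertibility premise of Section \ref{inferences}, and because Alice knows her own announcement $\tilde{A}$, she can recover $B_1$ from Bob's prediction and therefore report her true full-information posterior $h(A_1, B_1)$ on her second turn.

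Writing Alice's total reward $R_A$ and Bob's reward $R_B$ as the telescoping differences of scoring-rule values along this chain, all intermediate $g(\tilde{A})$ and $h(\tilde{A}, B_1)$ terms cancel when summed:
\[
R_A + R_B = S\bigl(h(A_1, B_1), x\bigr) - S(C_1, x).
\]
Neither term on the right depends on $\tilde{A}$: $C_1$ is the public prior fixed before the game, and $h(A_1, B_1)$ is determined solely by the players' true signals. Taking expectation under Alice's beliefs at the time of her first prediction (integrating over $B_1$ and $x$ conditional on $A_1$) yields a quantity $K(A_1)$ that depends only on her true signal and the model, not on her first-round strategy. Hence $\E[R_A] = K(A_1) - \E[R_B]$, and maximizing $\E[R_A]$ over $\tilde{A}$ is equivalent to minimizing $\E[R_B]$, which is the claim.

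The main obstacle, and the one conceptually meaningful step, is the invariance of $A_2'$: if Alice could not decode Bob's signal from his public announcement, her second, truthful prediction would inherit a dependence on $\tilde{A}$, and the sum $R_A + R_B$ would no longer collapse to a $\tilde{A}$-free expression. This is precisely where the invertibility of the inference functions $g$ and $h$ (in Bob's argument, with Alice's argument held fixed at $\tilde{A}$) does the work; everything after that is algebraic bookkeeping, which I would present as a single display with a short sentence justifying the cancellation of the intermediate scoring terms.
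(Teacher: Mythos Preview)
Your proof is correct and follows essentially the same telescoping argument as the paper: both compute $R_A + R_B = S(h(A_1,B_1),x) - S(\bm{\pi},x)$, observe this is independent of Alice's first announcement, and conclude that maximizing $\E[R_A]$ is equivalent to minimizing $\E[R_B]$. Your explicit justification that $A_2' = h(A_1,B_1)$ via invertibility of $h$ in Bob's argument is a point the paper states more tersely (it simply asserts Alice's second prediction yields public belief $h(a,b)$), so your version is, if anything, slightly more careful on that step.
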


\begin{proof}
See Appendix.
\end{proof}

\begin{corollary}
\label{min-truth}
If a truthful first prediction minimizes Alice's expectation of Bob's reward, i.e., if
\s\begin{align*}
\E\limits_{b \sim B_1|(A_1=a)} \E \limits_{\lambda \sim h(a,b)} \Pi_B(\lambda; a,\hat{a},b)  
= \Bigl\{\E\limits_{b \sim B_1|(A_1=a)} S(h(\hat{a},b),h(a,b))\Bigr\} - S(g(\hat{a}),g(a))
\end{align*}\n

\noindent is minimized at $\hat{a} = a$, then truth is Alice's best policy.
\end{corollary}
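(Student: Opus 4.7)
The plan is to apply Theorem~\ref{minimize} and then verify that the displayed expression is precisely Alice's expectation of Bob's reward as a function of her announcement $\hat a$. Theorem~\ref{minimize} already reduces the question of whether truth is Alice's best first move to whether $\hat a = a$ minimises this expectation, under Alice's beliefs at the time of her first prediction and under the assumption that Bob is truthful and that Bob and the public take Alice's announcement as truthful. So everything amounts to rewriting Bob's expected reward.

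First I would record Bob's incremental reward. Because Bob is truthful and makes a single prediction, he announces his posterior $h(\hat a,b)$, which replaces the public belief $g(\hat a)$ set by Alice's announcement; the market-scoring-rule reward is therefore $\Pi_B(\lambda;a,\hat a,b) = S(h(\hat a,b),\lambda) - S(g(\hat a),\lambda)$. Alice then takes expectations over the joint law of $(b,\lambda)$ conditional on her true signal $A_1 = a$, which factors as $b \sim B_1 \mid (A_1 = a)$ followed by $\lambda \sim h(a,b)$, using the \emph{true} signals in $h$.

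The main (and essentially only nontrivial) step is the treatment of the second summand. Using the convention $S(\bm p,\bm q) = \E_{r \sim \bm q} S(\bm p,r)$, the first summand collapses at once to $\E_{b \mid A_1 = a} S(h(\hat a,b), h(a,b))$. For the second, I would invoke the tower property of conditional expectation: averaging $h(a,b)$ over $b \sim B_1 \mid (A_1 = a)$ reproduces $g(a)$, the posterior of $\lambda$ given Alice's signal alone, so the double expectation of $S(g(\hat a),\lambda)$ collapses to the constant $S(g(\hat a), g(a))$, which can be pulled outside. Substituting yields exactly the identity in the statement, and combining it with Theorem~\ref{minimize} gives the conclusion. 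The one pitfall I would guard against is writing $h(\hat a,b)$ instead of $h(a,b)$ for the law of $\lambda$: the ``take announcements as truthful'' pretence governs the scores that are paid, not the physical data-generating law under which Alice forms her expectations.
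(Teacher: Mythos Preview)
Your proof is correct and matches the paper's approach: the paper states this corollary without a separate proof, treating it as immediate from Theorem~\ref{minimize}, and the tower-property step you spell out appears verbatim later in the proof of Lemma~\ref{locally_truthful}. You have simply filled in the details the paper leaves implicit, including the correct caution that Alice's expectations are taken under $h(a,b)$ rather than $h(\hat a,b)$.
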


A corollary that we will find useful is the following.
\begin{corollary}
\label{every-b}
Let $c := \hat{a} - a$ be Alice's deviation from truth, and define \s$$\Delta(c;a, b) := \Bigl[S(h(\hat{a},b),h(a,b)) - S(h(a,b),h(a,b))\Bigr] - \Bigl[S(g(\hat{a}),g(a)) - S(g(a), g(a))\Bigr]$$\n

If, for every $c \neq 0$, and every $b$, $\Delta(c;a, b) > 0$, Alice will be promptly truthful.

Alternatively, if, for every $c \neq 0$, and every $b$, $\Delta(c;a, b) < 0$, Alice will not be promptly truthful.
\end{corollary}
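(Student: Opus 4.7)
The plan is to show that Corollary \ref{every-b} is a direct pointwise-to-integral repackaging of Corollary \ref{min-truth}. The key observation is that the two terms $S(h(a,b),h(a,b))$ and $S(g(a),g(a))$ that are subtracted inside the brackets in the definition of $\Delta(c;a,b)$ depend only on the true $a$ and on $b$, not on the deviation $c=\hat a-a$. Defining
$$F(\hat a) := \E\limits_{b \sim B_1|(A_1=a)} S(h(\hat a,b),h(a,b)) - S(g(\hat a),g(a)),$$
which is exactly Alice's objective in Corollary \ref{min-truth}, taking the expectation of $\Delta$ over $b \sim B_1|(A_1=a)$ immediately yields the identity
$$\E\limits_{b \sim B_1|(A_1=a)} \Delta(c;a,b) = F(a+c) - F(a).$$

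For the first half, suppose $\Delta(c;a,b) > 0$ for every $c \neq 0$ and every $b$. Monotonicity of expectation then gives $F(a+c) > F(a)$ for every $c \neq 0$, so $\hat a = a$ is the unique strict minimizer of $F$. By Corollary \ref{min-truth}, combined with Theorem \ref{minimize}, telling the truth is Alice's strictly optimal first prediction. Because the preference is strict, Alice cannot improve by delaying or perturbing her report, so she is promptly truthful in the sense of Section \ref{goals}. The second half is perfectly symmetric: $\Delta(c;a,b) < 0$ pointwise integrates to $F(a+c) < F(a)$ for every $c \neq 0$, so truthful reporting is strictly dominated by every deviation and Alice will not be promptly truthful.

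The only technical point to watch is the standard measure-theoretic obstacle of preserving \emph{strict} inequality when integrating a strictly positive (or negative) function; this holds as soon as the conditional distribution of $B_1$ given $A_1=a$ places positive measure on every open subset of its support, which is implicit in the absolute-continuity and informativeness assumptions already in force in Section \ref{formulation}. Beyond this, the argument is purely algebraic: the corollary is best read as a convenient, pointwise-in-$b$ sufficient condition for (or refutation of) the integral criterion in Corollary \ref{min-truth}.
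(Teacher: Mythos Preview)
Your proof is correct and follows essentially the same route as the paper's own proof: both recognize that $\Delta(c;a,b)$ differs from Alice's conditional expectation of $\Pi_B$ only by terms independent of $\hat a$, so a pointwise-in-$b$ sign condition on $\Delta$ forces the same sign on the integrated objective of Corollary~\ref{min-truth}. Your version is slightly more explicit---you spell out the identity $\E_b \Delta(c;a,b)=F(a+c)-F(a)$, treat the untruthful half symmetrically, and flag the strict-inequality-under-integration caveat---whereas the paper compresses all of this into two sentences; but there is no substantive difference in approach.
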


\begin{proof}
If Alice assumes $B_1 = b$ when her belief is $g(a)$, then Alice's assumed belief is $h(a, b)$, and so her expectation of $\Pi_B(x; a,\hat{a},b)$ is (see \eqref{Pi_B} in Appendix), $S(h(\hat{a},b),h(a,b)) - S(g(\hat{a}),g(a))$. So the corollary assumes that truth ($\hat{a} = a$) minimizes Alice's expectation of Bob's reward for every $b$, and therefore also Alice's total expectation of Bob's reward. The corollary follows by Corollary \ref{min-truth}.
\qed
\end{proof}

\section{Which Prediction Settings are Already Truthful?}
\label{already}

We described above (Example \ref{problem}) an elementary setting that is not truthful, and a handful of other settings have been settled either way in the literature. But in the landscape of prediction settings that are of interest, the coverage has been very sparse. Beyond \cite{chen2016informational}'s criterion of ``Informational Substitutes'', which does not amount to an explicit algorithm\footnote{A submodularity property is required of the signal lattice, in a context described in their article.}, and \cite{kong2018optimizing}, who offer a relatively-efficient algorithm to settle some scenarios, we have no procedure to settle any given case, and the problem remains opaque.

With the results we derived in Section \ref{maximizing}, we now have such procedures. We shall apply them to classify settings belonging to the most commonly-met continuous distributions and the most commonly used scoring rules. The distributions are:
\begin{itemize}
\item The multivariate normal distribution $N(\bm{\mu},\bm{\Sigma})$. In contrast to general joint distributions, where the interdependence of components may take practically any form, the interdependence of joint multivariate normal components is completely determined by their covariance matrix. This means that in the Alice-Bob-Alice game, the signals interdependence is completely determined by a single real parameter: their correlation coefficient $-1 \leq \rho \leq 1$. Our investigation boils down to finding for what values of $\rho$ Alice will be promptly truthful.

%
\item The multivariate lognormal distribution $\bm{Y} = exp(\bm{X})$, where $\bm{X}$ is a multivariate normal distribution as described in the previous item. Stock prices, and their derivatives, are commonly modelled by a lognormal distribution. Since taking logs transform multivariate lognormal random variables to multivariate normal random variables, our classification of the multivariate normal distribution readily provides a classification of this distribution too.
\end{itemize}

The scoring rules we cover are
\begin{itemize}
\item The Logarithmic Scoring Rule $S(\bm{p},r) = \log p_r$. As mentioned, this proper scoring rule is employed as Hanson's LMSR market-maker. It has strong information theory roots, and is uniquely {\em local}, i.e. depending only on the distribution at the outcome ($r$).
\item The Brier/Quadratic Scoring Rule $S(\bm{p},r) = 2 p_r - \bm{p} \cdot \bm{p} - 1$. It is the earliest scoring rule, introduced by \cite{Brier50}, in the context of weather forecasting.\end{itemize}

\subsection{Multivariate Normal Distribution}

Let the public prior $\bm{\pi}$ be distributed $N(C_0, 1 / \tau_C)$, where $\tau_C = 1 / \sigma_C^2$ is the accuracy. $C_0$ is the public's prediction mean at the start of the game. Note that an uninformative prior is characterized by having an arbitrarily large variance $\sigma_C^2$, or equivalently $\tau_C = 0$, and in such a case, the value of $C_0$ is inconsequential.

Similarly, Alice's and Bob's priors are distributed $N(A_1, 1 / \tau_{AC})$ and $N(B_1, 1 / \tau_{BC})$, respectively, where $\tau_{AC} = 1 / \sigma_{AC}^2$ and $\tau_{BC} = 1 / \sigma_{BC}^2$ are the respective accuracies. $A_1, B_1$ are the means of, respectively, Alice's and Bob's prior predictions. Alice's prediction $A_1$ is inferred from the public prior $C_0$ and Alice's signal $A_0$, and similarly for Bob.

We lose no generality in assuming that each of the player's signals is conditionally independent of the prior. Because, if Alice, e.g., is aware of the public prior $C_0$ and her prior prediction $A_1$, she can consistently assume that her signal $\bm{\pi_A}$ is distributed $N(A_0, 1 / \tau_A)$, where
\s\begin{align}
A_0 &= \frac{\tau_{AC} A_1 - \tau_C C_0}{\tau_{AC} - \tau_C} &
\label{acc_1}
\tau_A &= \tau_{AC} - \tau_C
\end{align}\n

It is easy to verify that $\bm{\pi_A}$ and $\bm{\pi}$ are mutually independent. Note that $\tau_{AC} > \tau_C$ since we assume Alice's signal is informative, hence $\tau_A > 0$.

For an uninformative prior $\tau_C = 0$ this degenerates to $A_0 = A_1, B_0 = B_1$, as can be expected.

By assumption the private signals $\bm{\pi_A}, \bm{\pi_B}$ and the public prior $\bm{\pi}$ belong to a jointly multivariate normal distribution, and as such, their joint distribution is completely determined by their means and by their covariance matrix. Since $\bm{\pi_A}, \bm{\pi_B}$ are, w.l.o.g., conditionally independent of $\bm{\pi}$, the distributions are uncorrelated with $\bm{\pi}$, and the covariance matrix $M$ for $(\bm{\pi_A}, \bm{\pi_B}, \bm{\pi})$ is
\s\begin{align}
M = 
  \begin{pmatrix}
\label{covariance}
\sigma_A^2 & \rho \sigma_A \sigma_B & 0 \\
 \rho \sigma_A \sigma_B & \sigma_B^2 & 0 \\
 0 & 0 & \sigma_C^2
  \end{pmatrix}
\end{align}\n
where $\rho \in [-1,1]$ is the correlation coefficient of $\bm{\pi_A}$ and $\bm{\pi_B}$, defined as: \s$$\rho = \frac{Cov(\bm{\pi_A}, \bm{\pi_B})}{\sqrt{Var(\bm{\pi_A}) Var(\bm{\pi_B})}}$$\n As is well-known (e.g. \cite{Tong}), components of a multivariate normal distribution are independent iff they are uncorrelated, i.e. have $\rho = 0$.

%
%

\begin{proposition}
\label{abc}
Assume that the model, including covariance matrix $M$ \eqref{covariance}, is common knowledge. Then, knowing everyone's prior prediction $A_1, B_1$ and $C_0$, the posterior, marked $\bm{\pi_{ABC}}$, has distribution $N(\mu_{ABC},1 / \tau_{ABC})$, with
\s\begin{align}
\label{abc-mu}
\mu_{ABC} &= \frac{(\tau_A - \rho \sqrt{\tau_A \tau_B}) A_0 + (\tau_B - \rho \sqrt{\tau_A \tau_B}) B_0 + (1 - \rho^2) \tau_C C_0}{\tau_A - 2 \rho \sqrt{\tau_A \tau_B} + \tau_B + (1 - \rho^2) \tau_C} \\
\label{abc-tau}
\tau_{ABC} &= \frac{\tau_A - 2 \rho \sqrt{\tau_A \tau_B} + \tau_B}{1 - \rho^2} + \tau_C
\end{align}\n
\end{proposition}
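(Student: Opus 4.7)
The plan is to identify the posterior as the Gaussian obtained by multiplying the prior density by the likelihood of the private signals, then reading off the mean and precision by completing the square in $\lambda$. Because the signals $\bm{\pi_A},\bm{\pi_B}$ are, by the zero off-diagonal entries of $M$, conditionally independent of the public prior $\bm{\pi}$, the update factors as $p(\lambda\mid A_0,B_0,C_0)\propto L(\lambda;A_0,B_0)\cdot\pi_\lambda$, where $\pi_\lambda$ is the $N(C_0,1/\tau_C)$ density and $L(\lambda;A_0,B_0)$ is the density of a bivariate normal in $(A_0,B_0)$ with mean $(\lambda,\lambda)$ and covariance equal to the top-left $2\times 2$ block $\Sigma_{AB}$ of $M$.

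Next, invert $\Sigma_{AB}$. A direct calculation using $\sigma_A^2=1/\tau_A$, $\sigma_B^2=1/\tau_B$ gives
\[
\Sigma_{AB}^{-1}=\frac{1}{1-\rho^2}\begin{pmatrix}\tau_A & -\rho\sqrt{\tau_A\tau_B}\\ -\rho\sqrt{\tau_A\tau_B} & \tau_B\end{pmatrix}.
\]
Writing $y=(A_0,B_0)^T$ and $\mathbf{1}=(1,1)^T$, the log-likelihood $-\tfrac12 (y-\lambda\mathbf{1})^T\Sigma_{AB}^{-1}(y-\lambda\mathbf{1})$ is quadratic in $\lambda$ with coefficient of $\lambda^2$ equal to $-\tfrac12\mathbf{1}^T\Sigma_{AB}^{-1}\mathbf{1}=-\tfrac12\frac{\tau_A-2\rho\sqrt{\tau_A\tau_B}+\tau_B}{1-\rho^2}$, and linear coefficient $\mathbf{1}^T\Sigma_{AB}^{-1} y=\frac{(\tau_A-\rho\sqrt{\tau_A\tau_B})A_0+(\tau_B-\rho\sqrt{\tau_A\tau_B})B_0}{1-\rho^2}$. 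Thus the likelihood is Gaussian in $\lambda$ with precision $\tau_L:=\frac{\tau_A-2\rho\sqrt{\tau_A\tau_B}+\tau_B}{1-\rho^2}$.

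Finally, invoke the standard conjugate formula for Gaussian–Gaussian updating: multiplying a Gaussian prior of precision $\tau_C$ and mean $C_0$ by a Gaussian likelihood of precision $\tau_L$ and ``effective observation'' $\mathbf{1}^T\Sigma_{AB}^{-1}y/\tau_L$ produces a Gaussian posterior with precision $\tau_L+\tau_C$, which is $\tau_{ABC}$, and mean equal to the precision-weighted average of the prior mean and the effective observation. Clearing the common factor $1/(1-\rho^2)$ in numerator and denominator yields exactly \eqref{abc-mu}.

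The only real work is algebraic bookkeeping: inverting $\Sigma_{AB}$, checking that $\mathbf{1}^T\Sigma_{AB}^{-1}y$ produces the asymmetric-looking but symmetric-in-$(A,B)$ expression in \eqref{abc-mu}, and verifying the precision addition. There is no conceptual obstacle since the degeneracy $\rho^2=1$ is excluded by strict positive-definiteness of $\Sigma_{AB}$ (implicit in $\tau_A,\tau_B,\tau_L>0$), and the edge case $\tau_C=0$ of an uninformative prior simply drops the $\tau_C C_0$ and $(1-\rho^2)\tau_C$ terms without disrupting the derivation.
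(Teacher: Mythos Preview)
Your proof is correct, but it takes a genuinely different route from the paper's. The paper proceeds by an orthogonalization trick: it defines an auxiliary signal $\bm{\pi_B'}=(1-\alpha)\bm{\pi_A}+\alpha\bm{\pi_B}$ with $\alpha$ chosen so that $\bm{\pi_A}$ and $\bm{\pi_B'}$ are uncorrelated (hence independent, being jointly Gaussian), applies the one-dimensional precision-addition rule to this independent pair to obtain $\mu_{AB},\tau_{AB}$, and then applies the same rule once more to fold in the independent prior $\bm{\pi}$. You instead invert the $2\times2$ block $\Sigma_{AB}$ directly, expand the bivariate log-likelihood as a quadratic in $\lambda$, and invoke Gaussian--Gaussian conjugacy in one step. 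Your matrix-inversion approach is more systematic and extends transparently to more than two correlated signals (one just inverts a larger block), while the paper's decorrelation argument avoids any matrix algebra and keeps every step at the level of scalar precision-weighted averaging. Both arrive at the same $\tau_{AB}=\tau_L$ and $\mu_{AB}$ as intermediate objects, so the final combination with $\tau_C,C_0$ is identical.
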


\begin{proof}
See Appendix.
\end{proof}

\subsubsection{With Logarithmic Scoring}

\begin{proposition}
\label{lmsr-normal}
Let Alice's, Bob's and the public signals in an Alice-Bob-Alice game belong to a jointly multivariate normal distribution, with covariance matrix given in \eqref{covariance}. When using the logarithmic scoring rule, the players' best-response strategy to truthfulness in others is truthfulness and promptness iff
\s\begin{align}
\biggl(1-\rho^2\biggr)^2 \biggl(1 + \frac{\tau_C}{\tau_B}\biggr) \geq \biggl(\rho^2 + \frac{\tau_C}{\tau_A}\biggr)\biggl(\sqrt\frac{\tau_A}{\tau_B} - \rho\biggr)^2
\end{align}\n
\end{proposition}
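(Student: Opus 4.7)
The plan is to apply Corollary \ref{every-b} by computing $\Delta(c;a,b)$ explicitly for the logarithmic rule on normal posteriors, showing its sign is independent of $a$ and $b$, and rearranging the resulting condition into the stated form.

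The key simplifying observation is that for two normal densities with the \emph{same} precision $\tau$, a direct calculation gives $S(p,q)-S(q,q) = -\frac{\tau}{2}(\mu_p-\mu_q)^2$. In our setting, $h(a,b)$ and $h(\hat a, b)$ both have precision $\tau_{ABC}$ (because precisions depend only on which signals are observed, not on their values), and likewise $g(a)$ and $g(\hat a)$ both have precision $\tau_{AC}$. Hence
$$\Delta(c;a,b) = \frac{\tau_{AC}}{2}\bigl(\mu_{g(\hat a)}-\mu_{g(a)}\bigr)^2 - \frac{\tau_{ABC}}{2}\bigl(\mu_{h(\hat a,b)}-\mu_{h(a,b)}\bigr)^2.$$
Since Alice's announcement $a=A_1$ already equals her Bayesian posterior mean given her signal and the public prior, a short check shows $\mu_{g(a)}=a$, so the first difference is simply $c$. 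For the second, Proposition \ref{abc} writes $\mu_{h(a,b)}$ as a linear function of $A_0$ with slope $(\tau_A-\rho\sqrt{\tau_A\tau_B})/D$, where $D:=\tau_A-2\rho\sqrt{\tau_A\tau_B}+\tau_B+(1-\rho^2)\tau_C$; combined with $\partial A_0/\partial A_1=\tau_{AC}/\tau_A$ from \eqref{acc_1}, this expresses $\mu_{h(\hat a,b)}-\mu_{h(a,b)}$ as a multiple of $c$ that is \emph{independent of $b$}. Consequently $\Delta(c;a,b)$ has the form $K c^2$ for a constant $K$ depending only on the model parameters.

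Because $K$ has constant sign across all $c$ and $b$, Corollary \ref{every-b} gives the iff in both directions at once: Alice is promptly truthful iff $K\geq 0$. After substitution this reads
$$\tau_A(1-\rho^2)^2\tau_{ABC} \;\geq\; \tau_{AC}\bigl(\sqrt{\tau_A}-\rho\sqrt{\tau_B}\bigr)^2,$$
where I have used $D=(1-\rho^2)\tau_{ABC}$ from \eqref{abc-tau} to eliminate $D$. The rest is algebraic bookkeeping: divide both sides by $\tau_A\tau_B$, write the left-hand side via \eqref{abc-tau} as
$$(1-\rho^2)^2\frac{\tau_{ABC}}{\tau_B} = (1-\rho^2)\bigl(\sqrt{\tau_A/\tau_B}-\rho\bigr)^2 + (1-\rho^2)^2\bigl(1+\tau_C/\tau_B\bigr),$$
using the identity $\tau_A/\tau_B-2\rho\sqrt{\tau_A/\tau_B}+1=(\sqrt{\tau_A/\tau_B}-\rho)^2+(1-\rho^2)$, and cancel the common term $(1-\rho^2)(\sqrt{\tau_A/\tau_B}-\rho)^2$ against the right-hand side $(1+\tau_C/\tau_A)(\sqrt{\tau_A/\tau_B}-\rho)^2$, leaving the stated inequality on $(\rho^2+\tau_C/\tau_A)$. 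The main obstacle is this algebraic collapse; the conceptual heart is the observation that $b$ drops out of $\Delta$, so that the iff follows from a single inequality rather than a family of them.
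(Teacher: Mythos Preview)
Your proposal is correct and follows essentially the same route as the paper: compute $S(\hat p,p)-S(p,p)=-\tfrac{\tau}{2}(\hat\mu-\mu)^2$ for normals of equal precision, plug in the means from Proposition~\ref{abc} to see $\Delta(c;a,b)=Kc^2$ with $K$ independent of $a,b$, and then invoke Corollary~\ref{every-b} and rearrange. The only cosmetic differences are that the paper parameterises the deviation at the level of the signal $A_0$ rather than the announcement $A_1$, and carries out the final algebra through a slightly different rearrangement than your completing-the-square identity.
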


\begin{proof}
See Appendix.
\end{proof}

In particular, we note in Proposition \ref{lmsr-normal} that the case $|\rho| = 1$ is untruthful, and more generally the same is true when the correlation coefficient is either too positive or too negative. The middle ground is truthful. In particular, substituting $\rho = 0$, where the players' signals are conditionally independent, is found to be truthful. This generalizes the result of \cite{Chen2007} (Theorem 5) for conditionally independent signals with the logarithmic scoring rule, because it shows that conditionally independent signals induce truthfulness even when modified by a common-knowledge prior.

Also in particular, if the common prior is uninformative, or unknown, we get the following corollary by substituting $\tau_C = 0$ in Proposition \ref{lmsr-normal} and solving for $\rho$.
\begin{corollary}
\label{uninformative}
Let Alice's and Bob's signals in an Alice-Bob-Alice game belong to a jointly multivariate normal distribution, with covariance matrix given in \eqref{covariance}, and an uninformative or unknown common prior ($\tau_C = 0$). Then, with the logarithmic scoring rule, the players' best-response strategy to truthfulness in others is truthfulness and promptness iff
\s\begin{align*}
\left\{  \begin{array}{ll}
\rho \leq \frac{\sigma_A}{\sigma_B} &$and$ \\
\frac{1}{4}\biggl(\frac{\sigma_B}{\sigma_A} - \sqrt{\frac{\sigma_B^2}{\sigma_A^2} + 8}\biggr) \leq \rho \leq \frac{1}{4}\biggl(\frac{\sigma_B}{\sigma_A} + \sqrt{\frac{\sigma_B^2}{\sigma_A^2} + 8}\biggr)
\end{array} \right.
\end{align*}\n
\end{corollary}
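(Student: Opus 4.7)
The plan is to directly substitute $\tau_C = 0$ into the inequality of Proposition \ref{lmsr-normal} and perform elementary algebra, the only subtlety being a brief case analysis at the end to confirm that the disjunction from "sign of product" reduces to the single regime stated in the corollary.

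First I substitute $\tau_C = 0$, which collapses the RHS of the Proposition's inequality and leaves $(1-\rho^2)^2 \geq \rho^2 \bigl(\sqrt{\tau_A/\tau_B}-\rho\bigr)^2$. Because $\tau_A = 1/\sigma_A^2$ and $\tau_B = 1/\sigma_B^2$, we have $\sqrt{\tau_A/\tau_B} = \sigma_B/\sigma_A$, so the condition becomes
\begin{align*}
(1-\rho^2)^2 \;\geq\; \rho^2 \biggl(\frac{\sigma_B}{\sigma_A} - \rho\biggr)^{\!2}.
\end{align*}
Both sides are squares, so I factor the difference as a product of two signed terms:
\begin{align*}
\Bigl[(1-\rho^2) - \rho\bigl(\tfrac{\sigma_B}{\sigma_A}-\rho\bigr)\Bigr]\Bigl[(1-\rho^2) + \rho\bigl(\tfrac{\sigma_B}{\sigma_A}-\rho\bigr)\Bigr] \;\geq\; 0,
\end{align*}
and expanding each bracket simplifies nicely because the $\rho^2$ terms in the first bracket cancel. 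I get
\begin{align*}
\Bigl(1 - \rho\,\tfrac{\sigma_B}{\sigma_A}\Bigr)\Bigl(1 - 2\rho^2 + \rho\,\tfrac{\sigma_B}{\sigma_A}\Bigr) \;\geq\; 0.
\end{align*}

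The first factor is nonnegative iff $\rho \leq \sigma_A/\sigma_B$. The second factor is a downward-opening quadratic in $\rho$, and solving $2\rho^2 - (\sigma_B/\sigma_A)\rho - 1 \leq 0$ by the quadratic formula yields the interval $\tfrac{1}{4}\bigl(\tfrac{\sigma_B}{\sigma_A}-\sqrt{\tfrac{\sigma_B^2}{\sigma_A^2}+8}\bigr) \leq \rho \leq \tfrac{1}{4}\bigl(\tfrac{\sigma_B}{\sigma_A}+\sqrt{\tfrac{\sigma_B^2}{\sigma_A^2}+8}\bigr)$. These are precisely the two displayed conditions in the corollary, corresponding to the case that \emph{both} factors are simultaneously nonnegative.

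The only obstacle is ruling out the other way a product can be nonnegative, namely both factors being simultaneously nonpositive. Writing $r = \sigma_B/\sigma_A$, this would require $\rho \geq 1/r$ and $\rho \geq (r+\sqrt{r^2+8})/4$. A short check shows that for $r < 1$ the first requires $\rho > 1$, impossible; and for $r \geq 1$ the second gives $(r+\sqrt{r^2+8})/4 \geq 1$ with equality only at $r=1,\rho=1$, where both factors actually vanish and the condition is satisfied trivially. Hence the "both nonpositive" branch contributes nothing beyond isolated boundary points already captured by the "both nonnegative" branch, and the corollary's two simultaneous conditions are equivalent to the factored inequality, completing the proof.
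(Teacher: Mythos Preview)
Your proof is correct and follows exactly the route the paper indicates, namely substituting $\tau_C=0$ into Proposition~\ref{lmsr-normal} and solving for $\rho$; the paper gives no further detail, so your explicit factorization and the case analysis ruling out the ``both nonpositive'' branch fill in precisely what the paper leaves to the reader. One small presentational remark: when you pass from ``second factor nonpositive'' to $\rho \ge (r+\sqrt{r^2+8})/4$, you silently discard the branch $\rho \le (r-\sqrt{r^2+8})/4$; it would be worth one clause noting that this lower root is negative while $\rho\ge 1/r>0$, so that branch is vacuous.
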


In particular, if $\sigma_A = \sigma_B$, truthfulness and promptness holds whenever $\rho \geq -\frac{1}{2}$.

\subsubsection{With Quadratic Scoring}

On the other hand, when using the Quadratic Scoring Rule, when the signals belong to a multivariate normal distribution, truthfulness is {\em never} optimal. This is summarized by the following proposition.
\begin{proposition}
\label{brier-normal}
Let Alice's, Bob's and the public signals in an Alice-Bob-Alice game belong to a jointly multivariate normal distribution, with covariance matrix given in \eqref{covariance}. Then, when using the quadratic scoring rule, truthfulness is {\em never} Alice's best-response strategy to truthfulness in others. Alice is, however, locally truthful for $0 < \rho < \frac{\sigma_A}{\sigma_B}$.
\end{proposition}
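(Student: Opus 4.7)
The plan is to apply Corollary \ref{every-b}, which reduces the question to the sign of $\Delta(c;a,b)$. For two normals $\bm{p} = N(\mu_p, 1/\tau)$ and $\bm{q} = N(\mu_q, 1/\tau)$ of common precision, a direct Gaussian integral gives
$$S(\bm{p},\bm{q}) - S(\bm{q},\bm{q}) = -\int(p-q)^2 \, d\lambda = \frac{2\sqrt{\tau}}{\sqrt{4\pi}}\Bigl[e^{-\tau(\mu_p-\mu_q)^2/4} - 1\Bigr].$$
Now $g(\hat{a})$ and $g(a)$ both have precision $\tau_{AC} = \tau_A + \tau_C$ (Alice's signal combined with the prior), while $h(\hat{a},b)$ and $h(a,b)$ both have precision $\tau_{ABC}$ as given by Proposition \ref{abc}; only their means depend on Alice's announcement $\hat{a}$, and linearly at that. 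The public's post-Alice mean equals her announcement (so the $g$-coefficient is $1$), while the $h$-coefficient $k_h$ follows by composing \eqref{abc-mu} with \eqref{acc_1}. Substituting into the identity above yields
$$\Delta(c) = \frac{2}{\sqrt{4\pi}}\left[\sqrt{\tau_{ABC}}\bigl(e^{-\tau_{ABC}k_h^2 c^2/4} - 1\bigr) - \sqrt{\tau_{AC}}\bigl(e^{-\tau_{AC}c^2/4} - 1\bigr)\right],$$
which is remarkably independent of both $a$ and $b$.

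For the first claim, I let $|c|\to\infty$: both exponentials vanish, so $\Delta(c) \to (2/\sqrt{4\pi})(\sqrt{\tau_{AC}} - \sqrt{\tau_{ABC}})$, strictly negative because Bob's signal is informative and hence $\tau_{ABC} > \tau_{AC}$. By Theorem \ref{minimize}, Alice can strictly decrease her expectation of Bob's reward by deviating by a sufficiently large amount, so truthfulness is never her best response.

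For local truthfulness on $0 < \rho < \sigma_A/\sigma_B$, I Taylor-expand $\Delta$ around $c = 0$. Since $\Delta$ is an even function of $c$,
$$\Delta(c) = \frac{c^2}{4\sqrt{\pi}}\bigl(\tau_{AC}^{3/2} - \tau_{ABC}^{3/2}k_h^2\bigr) + O(c^4).$$
Plugging in the explicit $k_h$ and clearing denominators reduces the required positivity to $\tau_A(1-\rho^2)^2\sqrt{\tau_{ABC}} > (\sqrt{\tau_A} - \rho\sqrt{\tau_B})^2\sqrt{\tau_{AC}}$. I verify this on $0 < \rho < \sqrt{\tau_B/\tau_A} = \sigma_A/\sigma_B$ by first checking the endpoints: at $\rho = 0$ the inequality reduces to $\tau_{ABC} > \tau_{AC}$, and at $\rho = \sqrt{\tau_B/\tau_A}$ common factors of $(\tau_A-\tau_B)^2/\tau_A$ cancel to the same inequality. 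The main obstacle is ruling out a sign change of $f(\rho) := \tau_A(1-\rho^2)^2\sqrt{\tau_{ABC}(\rho)} - (\sqrt{\tau_A} - \rho\sqrt{\tau_B})^2\sqrt{\tau_{AC}}$ in the open interval; I expect this to follow from direct algebraic manipulation using the explicit form of $\tau_{ABC}(\rho)$ from Proposition \ref{abc}, though the bookkeeping is the least pleasant part of the argument.
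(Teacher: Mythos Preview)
Your approach is the paper's: compute the Brier divergence between equal-precision normals, form $\Delta(c)$ via Corollary~\ref{every-b}, let $|c|\to\infty$ for global non-truthfulness, and inspect the $c^2$ coefficient for local truthfulness. The global half is complete and matches the paper line for line.

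One discrepancy deserves comment: your divergence identity carries a prefactor $\sqrt{\tau}$, whereas the paper's Lemma~\ref{brier-diverge} has $\tau$. The paper writes the Gaussian normaliser as $\tau/\sqrt{2\pi}$ rather than $\sqrt{\tau/(2\pi)}$, which squares to $\tau^2/(2\pi)$ instead of the correct $\tau/(2\pi)$; a direct check of $\|\bm p\|_2^2=1/(2\sigma\sqrt\pi)$ confirms that \emph{your} version is the right one. For the $|c|\to\infty$ limit this is immaterial, since both $\tau_{ABC}>\tau_{AC}$ and $\sqrt{\tau_{ABC}}>\sqrt{\tau_{AC}}$ yield the same sign. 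But it changes the second-order coefficient: the paper's computation gives the condition $(1-\rho^2)^2>(1-\rho\sqrt{\tau_B/\tau_A})^2$, which solves exactly to $0<\rho<\sigma_A/\sigma_B$, whereas your (correct) coefficient gives the genuinely different inequality $\tau_A(1-\rho^2)^2\sqrt{\tau_{ABC}}>(\sqrt{\tau_A}-\rho\sqrt{\tau_B})^2\sqrt{\tau_{AC}}$.

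This explains why your endpoint-plus-hope argument does not close, and the gap is not merely unfinished bookkeeping. The interval $(0,\sigma_A/\sigma_B)$ is the solution set of the paper's miscomputed inequality, not of yours, so there is no reason for your $f(\rho)$ to be positive precisely there. Concretely, when $\tau_A<\tau_B$ one has $\sigma_A/\sigma_B>1$, so the claimed range is effectively $(0,1)$; but as $\rho\to 1^-$ your left side behaves like $\tau_A(1-\rho^2)^{3/2}|\sqrt{\tau_A}-\sqrt{\tau_B}|\to 0$ (since $\tau_{ABC}\sim(\sqrt{\tau_A}-\sqrt{\tau_B})^2/(1-\rho^2)$) while the right side tends to the positive constant $(\sqrt{\tau_A}-\sqrt{\tau_B})^2\sqrt{\tau_{AC}}$, so your inequality \emph{fails} near $\rho=1$. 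In short, with the correct divergence constant the local-truthfulness region does not coincide with the one stated in the proposition, and no amount of algebraic manipulation of $f$ will verify the stated range.
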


\begin{proof}
See Appendix.
\end{proof}

Remarkably, the result is unaffected by the prior ($\sigma_C$).

\section{Strategy-Proof Mechanisms}
\label{mechanisms}

\subsection{Group Prediction}

Our first strategy-proof mechanism scores the last prediction made by a proper scoring rule, and awards the score to {\em each} of the participating experts.

\begin{proposition}
\label{group}
Let $x$ be the outcome, and let the last prediction made before the outcome is revealed be distribution $\bm{p}$. Then the mechanism that awards $S(\bm{p},x)$ to each participating player, where $S$ is a proper scoring rule, is truthful. Furthermore, the mechanism elicits full disclosure.
\end{proposition}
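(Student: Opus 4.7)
The plan is to exploit properness of $S$ together with the tower property of conditional expectations. Since every participant receives the same reward $S(\bm{p}_{final},x)$, each player $i$ effectively maximizes $\E[S(\bm{p}_{final},x)\mid\mathcal{F}_i]$, where $\mathcal{F}_i$ denotes their information at the moment they speak.

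First I would dispose of the last predictor. With all predecessors assumed truthful and a proper $S$, properness immediately forces the last predictor to announce their current belief, and by Bayesian updating this belief coincides with the common posterior $\bm{\mu} := P(\cdot \mid \mathcal{F}_T)$ given all information revealed by the end of the game. Hence whenever everyone is truthful, $\bm{p}_{final} = \bm{\mu}$.

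Next, for a non-last predictor $i$ with current belief $\bm{q}$, I would use the tower property to write $\E[S(\bm{p}_{final},x)\mid\mathcal{F}_i] = \E\bigl[\E_{x \sim \bm{\mu}}[S(\bm{p}_{final},x)] \bigm| \mathcal{F}_i\bigr]$, invoking the fact that conditional on $\mathcal{F}_T$ the outcome $x$ is distributed as $\bm{\mu}$. For every realization of $\mathcal{F}_T$, properness bounds the inner expectation by $\E_{x \sim \bm{\mu}}[S(\bm{\mu},x)]$, with equality exactly at $\bm{p}_{final} = \bm{\mu}$. Truthful reporting by $i$ is precisely the action that allows subsequent predictors to track $\bm{\mu}$ and so attains this upper bound; any deviation feeds the wrong signal into the subsequent Bayesian updates and, under strict properness, strictly lowers the expectation.

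For full disclosure, the identical inequality handles silence rather than a deliberate lie: if $i$ suppresses an informative new signal, the final prediction is the posterior conditioned on a strictly smaller $\sigma$-field and hence differs from $\bm{\mu}$ with positive probability, so strict properness again yields a strict loss, making it advantageous for $i$ to predict after any new signal. The main obstacle is the Bayesian bookkeeping step guaranteeing that, under mutual truthfulness and the invertibility of inference functions assumed in Section~\ref{inferences}, each later predictor actually reconstructs the preceding signals so that $\bm{p}_{final}$ really does coincide with the full-information posterior $\bm{\mu}$; without that identification the tower-property reduction does not go through.
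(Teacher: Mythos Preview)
Your proposal is correct and follows essentially the same approach as the paper: both arguments rest on the observation that every player's payoff depends only on the final prediction, that under mutual truthfulness this final prediction is the full-information posterior, and that properness of $S$ then makes any deviation (weakly) unprofitable. The paper frames this as a direct Nash best-response check for each player after her last signal (hypothesize a truthful continuation, get final prediction $\bm p$ equal to the full posterior, and invoke $S(\bm p,\bm p)\ge S(\bm q,\bm p)$), whereas you make the conditional-expectation structure explicit via the tower property and add a backward-induction flavor starting from the last predictor; the substance is the same.
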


\begin{proof}
See Appendix.
\end{proof}


The mechanism does not motivate promptness, which we defined as a strict preference: Players may predict as late as possible without harming their welfare. 

Another drawback is unfairness: The mechanism awards all experts the same, regardless of their contribution. Indeed, a so-called expert who has no information of his own may reap the same reward as other experts by simply repeating the current public prediction. This means that, unless the number of experts is bounded, the mechanism is not loss-bounded. Attempts to fix this would be counterproductive, compromising truthfulness. For example, rewarding nothing to ``predictions'' that merely repeat the public prediction, will motivate an uninformed expert to pretend knowledge by ``tweaking'' the current public prediction.

\begin{table}[tb]
  \caption{Mechanism Score Card}
  \centering
 \begin{tabular}{lcccc}
\toprule
Mechanism & Truthful & Full Disclosure & Prompt & Bounded Loss \\
\midrule\midrule
Market Scoring Rule & $\times$ & \checkmark & $\times$ & \checkmark \\
\midrule
Group Prediction & \checkmark & \checkmark & $\times$ & $\times$\\
Enforce Single Prediction & \checkmark & $\times$ & $\times$ & \checkmark \\
Discounting & \checkmark & \checkmark & \checkmark & \checkmark \\
\bottomrule
\end{tabular}
\label{mechanism-score-card}
\end{table}

\subsection{Enforce Single Prediction}

Since multiple predictions are the source of the potential for manipulation, a mechanism that prevents that would restore general truthfulness.

\begin{proposition}
Let $S$ be a proper scoring rule. The mechanism that scores each prediction with the increment $S(\bm{p}, x) - S(\bm{p'}, x)$, where $\bm{p}$ is the predicted distribution and $\bm{p'}$ the previous public prediction, and rewards each expert with the {\em minimum} score out of all her predictions, is truthful.
\end{proposition}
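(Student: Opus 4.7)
The plan is to exploit the pointwise inequality $\min(s_1,s_2)\le s_1$ to show Alice never benefits from making a second prediction, then apply properness of $S$ to her lone prediction.

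First I would invoke Theorem \ref{general} to reduce to the Alice--Bob--Alice setting, with Alice facing two slots straddling Bob's single move. Let $\bm{p}_0$ be the market prior and suppose Alice contemplates posting $\bm{p}_1$ at time $1$ and (a function) $\bm{p}_2$ at time $3$, yielding increments $s_1 = S(\bm{p}_1,x) - S(\bm{p}_0,x)$ and $s_2 = S(\bm{p}_2,x) - S(\bm{p}_B,x)$, where $\bm{p}_B$ is Bob's truthful posting. Compare this with the one-prediction strategy that makes the same $\bm{p}_1$ and stays silent at time $3$: the distribution of $s_1$ is unchanged (it depends only on events up to time $1$), and Alice's reward equals $s_1$. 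Since $\min(s_1,s_2)\le s_1$ pointwise, $\E[\min(s_1,s_2)] \le \E[s_1]$ under Alice's time-$1$ beliefs, so the one-prediction strategy weakly dominates any two-prediction strategy beginning with the same $\bm{p}_1$.

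Next, among single-prediction strategies, $\E[S(\bm{p},x) - S(\bm{p}_0,x)]$ is maximized over $\bm{p}$ at Alice's true belief by the defining property of a proper scoring rule (the $S(\bm{p}_0,x)$ term is independent of $\bm{p}$). Combining the two steps shows that truth-telling followed by silence is a Bayes best response for Alice; Bob's sole prediction is truthful by the usual proper-scoring-rule argument on his unique opportunity. This exhibits a truthful Perfect Bayesian equilibrium, proving the mechanism is truthful.

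The main obstacle is the modeling question of what ``staying silent'' means. If the mechanism insists on a move at every slot, silence is realized by posting the current public prediction (giving increment $0$); the comparison then becomes $\min(s_1,s_2)$ versus $\min(0, s_2')$, where $s_2'$ is Alice's second score under the no-initial-move history. Properness still forces a strictly positive expected increment for a single truthful announcement whenever her signal is informative, so the same conclusion survives.
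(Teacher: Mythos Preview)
Your core argument---$\min(s_1,s_2)\le s_1$ pointwise, hence an expert never gains from a second prediction, and for a single prediction properness gives truthfulness---is exactly the paper's two-line proof. So the substance is correct and matches.

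Two framing issues are worth flagging. First, invoking Theorem~\ref{general} here is misplaced: that theorem is stated and proved specifically for the \emph{discounted} market scoring rule, and its proof leans on the telescoping property of market scoring rules (the ``Bobbies'' aggregate score depends only on initial and final public beliefs). The minimum-score mechanism does not telescope, so the reduction in Theorem~\ref{general} is not available to you. Fortunately you do not need it: the inequality $\min(s_1,\ldots,s_n)\le s_1$ applies to \emph{any} expert with \emph{any} number of slots in \emph{any} schedule, so no reduction to Alice--Bob--Alice is required. The paper's proof accordingly argues directly for an arbitrary expert.

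Second, your closing paragraph about ``forced moves'' is unnecessary and somewhat muddled. The mechanism as stated scores predictions that are \emph{made}; an expert who declines a slot simply has no score at that slot. The comparison you set up there ($\min(s_1,s_2)$ versus $\min(0,s_2')$) is not the relevant one, and in any case the paper does not model mandatory participation. You can safely drop that paragraph.
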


\begin{proof}
Once an expert has made a prediction, a further prediction may only lower her reward expectation and so is not optimal. Since incremental, proper scoring rules are truthful with a single prediction, the mechanism is truthful.
\qed
\end{proof}

While this mechanism is truthful, and loss-bounded (a consequence of Hanson's loss-boundedness. See Section \ref{goals} \eqref{loss}), it does not motivate promptness. Every expert 
needs to figure out the best time to make his single prediction, given other players' strategies, resulting in a Bayes-Nash equilibrium. As \cite{azar2016should} show, this can be complex. Furthermore, full disclosure is not assured, since experts may choose to make a prediction before getting their final signal. A further caveat is the assumption that expert identity can be verified by the mechanism.

\subsection{Discounting}
\label{discounting}

The last, and, we will argue, the most successful mechanism we suggest for incentive compatibility is {\em discounting}.

Discounting essentially uses a proper market scoring rule. As explained in the Introduction, for a proper scoring rule $S(\bm{p},r)$, the market scoring rule scores a prediction $\bm{p}$  $S(\bm{p},r) - S(\bm{p'},r)$ where $\bm{p'}$ is the outcome distribution that was replaced by $\bm{p}$, and $r$ is the outcome. Now any scoring rule can be scaled by an arbitrary constant $k$ and remain proper. Furthermore, $k S(\bm{p},r) - k' S(\bm{p'},r)$ where $k, k'$ may be different, is also proper. Generally, we can employ a time-varying scale factor $k(t) \in \mathbb{R}_{>0}$, and use the proper scoring rule $k(t) S(\bm{p}, r)$, where $t$ is the time (which may be the elapsed time from some base, or an integer counter of events) of announcement of $\bm{p}$. Discounting means choosing $k(t)$ that is weakly decreasing in $t$, whence we get a discounted scoring rule $D(\bm{p}, r, t) :=  k(t) S(\bm{p}, r)$, where $t$ is the time of prediction $\bm{p}$. When $t' < t$ is the time replaced prediction $\bm{p'}$ was made, the discounted market scoring rule for prediction $\bm{p}$ is\s$$D(\bm{p}, r, t) - D(\bm{p'},r, t') =  k(t) S(\bm{p}, r) - k(t') S(\bm{p'}, r)$$\n

The idea of discounting is that, if without discounting Alice's optimal strategy is to hide or distort information on her first prediction, in order to reap a bigger benefit on her second prediction, her calculus will change if a sufficiently steep discount, motivating earlier predictions, is imposed on her reward. 

We must be careful to use non-positive scoring rules with the discounting mechanism. Otherwise a possibility exists that the discounted scoring rule will have negative expectation for a prediction, creating a situation in which a player will prefer {\em not} predicting to making a truthful prediction. On the other hand, for non-positive scoring rules, $S(\bm{p}, r) \leq 0$, so that, whenever $0 \leq k(t) \leq k(t')$ \s$$k(t) S(\bm{p}, r) - k(t') S(\bm{p'}, r) \geq k'(t) [S(\bm{p}, r) - S(\bm{p'}, r)]$$\n So, if our original scoring rule had positive expectation, so does our discounted one.

Note that the logarithmic scoring rule $\log p_r$ and quadratic scoring rule $2 p_r - \bm{p} \cdot \bm{p} - 1$, which we have analyzed, are non-positive. Any scoring rule can be made non-positive by affine transform.

Being a market scoring rule, the discounting mechanism has bounded loss. More specifically

\begin{proposition}
\label{loss-bound}
A non-negative discounted market scoring rule $k(t) S(\bm{p}, r)$, where $t$ is the time of prediction $\bm{p}$ and $k(t)$ is a discounting function, has loss expectation bounded by $-k(0) S(\bm{\pi}, \bm{\pi})$, where $\bm{\pi}$ is the prior ($t=0$ prediction).
\end{proposition}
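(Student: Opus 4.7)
The plan is to generalize Hanson's telescoping argument for market scoring rules to the time-discounted case. The market maker's loss is the sum of all rewards paid out to traders. With the discounted market scoring rule, the reward to a trader who at time $t$ replaces the previous market prediction $\bm{p'}$ (made at time $t'$) by $\bm{p}$ is
\[
k(t)\, S(\bm{p}, r) - k(t')\, S(\bm{p'}, r).
\]
So, first I would write out the sum of all such rewards over the sequence of predictions $\bm{\pi} = \bm{p}_0, \bm{p}_1, \ldots, \bm{p}_n = \bm{p}_{final}$ made at times $0 = t_0 < t_1 < \cdots < t_n$, and observe that it telescopes:
\[
\sum_{i=1}^{n} \Bigl[ k(t_i)\, S(\bm{p}_i, r) - k(t_{i-1})\, S(\bm{p}_{i-1}, r) \Bigr] = k(t_n)\, S(\bm{p}_{final}, r) - k(0)\, S(\bm{\pi}, r).
\]

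Next I would take the expectation under the prior $\bm{\pi}$, since the market maker's ex-ante loss expectation is computed with respect to his own (prior) belief about the outcome. Using the notation $S(\bm{p}, \bm{q}) := \E_{r \sim \bm{q}} S(\bm{p}, r)$ already introduced in the paper, this gives
\[
\E[\text{Loss}] = k(t_n)\, S(\bm{p}_{final}, \bm{\pi}) - k(0)\, S(\bm{\pi}, \bm{\pi}).
\]

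Finally, I would invoke the assumption that $S$ is a non-positive scoring rule (as the paragraph preceding the proposition explicitly requires for the discounting mechanism, and as applies to the logarithmic and quadratic scoring rules studied earlier). Combined with $k(t_n) \geq 0$, this yields $k(t_n)\, S(\bm{p}_{final}, \bm{\pi}) \leq 0$, and the bound $\E[\text{Loss}] \leq -k(0)\, S(\bm{\pi}, \bm{\pi})$ follows immediately.

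There is no real obstacle — the argument is essentially the same telescoping trick as in \eqref{loss}, with the observation that each prediction contributes a term that cancels against the next trader's reduction term regardless of the scaling factor, because each $k(t_i)\, S(\bm{p}_i, r)$ appears once with a plus sign (when trader $i$ collects) and once with a minus sign (when trader $i+1$ buys out). The only point worth flagging is that the discount sequence $k(\cdot)$ need not itself be monotone for the telescoping identity, though the paper has separately imposed monotonicity to ensure truthfulness-preserving scaling; only non-negativity of $k(t_n)$ and non-positivity of $S$ are needed for the loss bound.
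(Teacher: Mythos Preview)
Your proof is correct and takes essentially the same approach as the paper's: both rely on the telescoping of the market scoring rule rewards and the non-positivity of $S$ to drop the final term $k(t_n)\,S(\bm{p}_{final},\bm{\pi})$. The paper's proof is terser, simply citing that the discounted rule is a Hanson-style market scoring rule and invoking \eqref{loss} directly, whereas you spell out the telescoping sum explicitly; the content is the same.
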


\begin{proof}
The expectation of $k(t) S(\bm{p}, r)$ for $r \sim \bm{p}$ is $k(t) S(\bm{p}, \bm{p})$. Since the discounted scoring rule is a market scoring rule per \cite{Hanson03}, its expected loss for $r \sim \bm{\pi}$ and $t = 0$ is, by \eqref{loss} (Section \ref{goals}), at most $-k(0) S(\bm{\pi}, \bm{\pi})$.
\end{proof}

\begin{remark}
Non-positive scoring rules suffer from an artefact that is the mirror image of positive ones: Experts have positive score expectation ($[k(t) - k(t')] S(p,r)$) for merely repeating the current prediction, and so will do so even if they have no information. In fact, the mechanism can offer them this ``reward'' automatically, sparing them the need to make an empty prediction. While this is ugly, its effect is minor, and the expected loss is still bounded by $-k(0) S(\bm{\pi}, \bm{\pi})$ by Proposition \ref{loss-bound}. As in other cases, attempts to mend this would be counterproductive. E.g., if we deduct the ``unearned'' $[k(t) - k(t')] S(p,r)$ from every score, we will compromise truthfulness, as this reverts to the discounting mechanism proposed in \cite{Chen2007}.
\end{remark}

\begin{proposition}
\label{existK}
Using the notation of Section \ref{inferences}, if $S$ is a non-positive scoring rule, and there exists $K$ such that
\s\begin{align}
\label{Kfactor}
K \geq \frac{\E\limits_{b \sim B_1|(A_1=a)}\bigl[S(h(a,b),h(a,b) - S(h(\hat{a},b),h(a,b))\bigr]}{S(g(a),g(a)) - S(g(\hat{a}),g(a))}
\end{align}\n

\noindent for every $A_1 = a$ and every possible $\hat{a}$, then the game is promptly truthful with a discount factor $k(t)$ that satisfies $k(t_1) / k(t_2) \geq K$, where $t_1$ is the time of Alice's first prediction and $t_2$ is the time of the second.
\end{proposition}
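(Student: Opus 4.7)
The plan is to reprise the analysis of Section \ref{maximizing} with every incremental score now carrying its own discount weight. Write Alice's first announcement as $\hat{a}$ (with $a$ her true signal); Bob, treating Alice as truthful, announces $h(\hat{a},b)$, and Alice, now armed with $b$, truthfully announces $h(a,b)$ on her second leg. The four legs of the discounted incremental rule contributing to Alice's payoff carry weights $-k(t_0),\,+k(t_1),\,-k(t_2),\,+k(t_3)$, where $t_1$ is Alice's first, $t_2$ is Bob's prediction (the second in sequence), and $t_3$ is Alice's second. Pushing Alice's expectations through --- $x\sim g(a)$ for her first leg, and $x\sim h(a,b)$ then $b\sim B_1|(A_1=a)$ for the remaining legs --- the $\hat{a}$-independent pieces drop out, and what Alice optimizes reduces to
\begin{align*}
\Psi(\hat{a}) \;:=\; k(t_1)\,S(g(\hat{a}),g(a)) \;-\; k(t_2)\,\E_{b\sim B_1|(A_1=a)}\,S(h(\hat{a},b),h(a,b)).
\end{align*}

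Then I would compare $\Psi(a)$ with $\Psi(\hat{a})$:
\begin{align*}
\Psi(a)-\Psi(\hat{a}) \;=\; k(t_1)\bigl[S(g(a),g(a))-S(g(\hat{a}),g(a))\bigr] \;-\; k(t_2)\,\E_{b}\bigl[S(h(a,b),h(a,b))-S(h(\hat{a},b),h(a,b))\bigr].
\end{align*}
Both brackets are non-negative by properness, so the sign of $\Psi(a)-\Psi(\hat{a})$ is pinned down by whether the weight ratio $k(t_1)/k(t_2)$ dominates the bracket ratio --- and that bracket ratio is exactly the right-hand side of \eqref{Kfactor}. Since by hypothesis $K$ is a uniform upper bound for that ratio over all $a$ and $\hat{a}$, the inequality $k(t_1)/k(t_2)\geq K$ yields $\Psi(a)\geq \Psi(\hat{a})$ for every deviation, giving truthfulness (strictly when $S$ is strictly proper). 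Promptness then follows from strict monotonicity of $k$: any delay of Alice's first announcement to a time $t_1'>t_1$ replaces $k(t_1)$ by a strictly smaller value, strictly shrinks the truth-favoring term in $\Psi$, and so strictly lowers Alice's expected payoff.

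The step I would most want to verify, and the likely sticking point, is that the non-positivity of $S$ stipulated in the statement is genuinely invoked. The bracket argument above only pits one spoken prediction against another; it does not, on its own, rule out Alice staying silent or merely parroting the prior. That possibility is dispatched by the reasoning preceding Proposition \ref{loss-bound}: for non-positive $S$ and non-increasing $k$, the discounted increment of a truthful prediction dominates the (non-negative) undiscounted proper-score gain $S(g(a),g(a))-S(\bm{\pi},g(a))\geq 0$, so speaking truthfully dominates remaining silent. This closes the gap between the local comparison on $\Psi$ and the full claim of prompt truthfulness.
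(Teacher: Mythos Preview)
Your truthfulness argument is correct and is precisely what the paper's one-line proof (``scoring rule $k(t)S(\bm p,r)$ satisfies Corollary~\ref{min-truth}'') asks the reader to unpack: under the discounted rule, Bob's expected reward is $k(t_2)\,\E_b S(h(\hat a,b),h(a,b))-k(t_1)\,S(g(\hat a),g(a))$, and asking that this be minimized at $\hat a=a$ is exactly your inequality $\Psi(a)\ge\Psi(\hat a)$, which the bound $k(t_1)/k(t_2)\ge K$ secures. So on the main point you and the paper proceed identically; you have simply written out the details the paper leaves implicit.

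Your promptness sentence, however, has the sign backwards. Since $S$ is non-positive, $S(g(a),g(a))\le 0$; replacing $k(t_1)$ by a strictly smaller $k(t_1')$ therefore makes $k(t_1')\,S(g(a),g(a))$ \emph{larger}, not smaller, and Alice's expected truthful payoff goes \emph{up} when she postpones her first announcement. This is exactly the artefact the paper flags in the Remark following Proposition~\ref{loss-bound} (experts gain positive expectation merely by repeating the standing prediction at a later time). The paper's proof does not argue promptness by a time-derivative of the payoff; it relies on Corollary~\ref{min-truth} to conclude that, at the scheduled first slot, a truthful report dominates every alternative report $\hat a$. Your final paragraph---invoking non-positivity to ensure that speaking truthfully dominates staying silent---is the correct complement to that comparison; the ``delay strictly lowers Alice's expected payoff'' claim should be dropped.
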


\begin{proof}
If \eqref{Kfactor} is satisfied, scoring rule $k(t) S(p, r)$ satisfies Corollary \ref{min-truth}.
\qed
\end{proof}

From Proposition \ref{existK}, we state sufficient conditions for discounting to succeed.
\begin{corollary}
\label{discount-crit}
If
\begin{enumerate}
\item $S$ is a non-positive strictly proper scoring rule, and
\item $\E\limits_{b \sim B_1|(A_1=a)}\bigl[S(h(a,b),h(a,b) - S(h(\hat{a},b),h(a,b))\bigr]$ is bounded for every $a, \hat{a}$
\begin{itemize}
\item In particular, if $S$ is the quadratic scoring rule, or any other bounded scoring rule, and
\end{itemize}
\item The right-hand side of \eqref{Kfactor} is bounded when $\hat{a} \to \pm\infty$, and
\item $lim_{\hat{a} \to a} \frac{ \frac{\partial^2}{\partial \hat{a}^2} S(h(\hat{a},b),h(a,b))}{ \frac{\partial^2}{\partial \hat{a}^2} S(g(\hat{a}),g(a))}$ exists for every $a, b$.
\end{enumerate}

Then there exists a discount factor effective at restoring truthfulness.
\end{corollary}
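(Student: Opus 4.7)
The plan is to apply Proposition~\ref{existK}, which reduces the task to exhibiting a finite constant $K$ that uniformly upper-bounds
$$R(a,\hat{a}) \;:=\; \frac{\E\limits_{b \sim B_1|(A_1=a)}\bigl[S(h(a,b),h(a,b)) - S(h(\hat{a},b),h(a,b))\bigr]}{S(g(a),g(a)) - S(g(\hat{a}),g(a))}$$
over all $a$ and all $\hat{a}\neq a$; any discount schedule satisfying $k(t_1)/k(t_2) \geq K$ then enforces prompt truthfulness.

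For fixed $a$, I would split the analysis of $\hat{a}$ into three regimes. First, on any compact set of $\hat{a}$ bounded away from $a$, hypothesis~1 (strict propriety) makes the denominator strictly positive and continuous, while hypothesis~2 bounds the numerator, so $R$ is continuous and bounded there. Second, as $\hat{a}\to\pm\infty$, hypothesis~3 directly provides the bound. Third, and most delicately, as $\hat{a}\to a$ the ratio is an indeterminate $0/0$: by propriety, both $\hat{a}\mapsto S(g(\hat{a}),g(a))$ and, for each $b$, $\hat{a}\mapsto S(h(\hat{a},b),h(a,b))$ are maximized at $\hat{a}=a$, so their first derivatives vanish there. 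Two applications of L'H\^opital then reduce the limit to a ratio of second derivatives, whose finiteness (after interchanging $\E_b$ with $\partial^2/\partial\hat{a}^2$ in the numerator) is precisely what hypothesis~4 guarantees.

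Patching the three regimes together, $R(a,\cdot)$ extends to a finite continuous function on the two-point compactification of $\mathbb{R}$, hence attains a finite supremum. Reading hypotheses~2 and~3 as providing uniform bounds in $a$ as well (the natural reading, since $k(\cdot)$ must be fixed before Alice observes her signal), I would then take the supremum also over $a$ to obtain $K$, and the corollary follows from Proposition~\ref{existK}.

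The main obstacle I expect is the $0/0$ indeterminacy at $\hat{a}=a$: because propriety makes both numerator and denominator vanish to second order, mere continuity or finiteness away from the diagonal does not suffice, and one needs exactly the ``second-derivative ratio exists'' content of hypothesis~4 to control the limit. A secondary, essentially routine, issue is the interchange of $\E_b$ and $\partial^2/\partial\hat{a}^2$ in the numerator, which should go through given that $S$ is assumed twice differentiable in the problem formulation and that hypothesis~2 bounds the integrand.
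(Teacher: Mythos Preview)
Your proposal is correct and follows essentially the same approach as the paper: bound the ratio away from the diagonal using strict propriety (positive denominator) and hypothesis~2 (bounded numerator), invoke hypothesis~3 at infinity, and resolve the $0/0$ indeterminacy at $\hat{a}=a$ via two applications of L'H\^{o}pital, landing on the second-derivative ratio controlled by hypothesis~4. Your write-up is in fact more careful than the paper's terse proof, which does not explicitly discuss the compactness patching, the uniformity in $a$, or the interchange of $\E_b$ with $\partial^2/\partial\hat{a}^2$.
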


\begin{proof}
As $S$ is strictly proper, and the nominator of \eqref{Kfactor} is bounded, \eqref{Kfactor} can be unbounded only at infinity or at $\hat{a} = a$. At $\hat{a} = a$, \eqref{Kfactor} evaluates to $\frac{0}{0}$, so we invoke L'H\^{o}pital's rule to find the limit. By the definition of a proper score rule, first derivatives again evaluate to $\frac{0}{0}$, so invoke L'H\^{o}pital again for second derivatives.
\qed
\end{proof}

Discounting can restore truthfulness in the untruthful settings we found in Section \ref{already}.

\begin{proposition}
\label{lmsr-normal-discount}
Let Alice's, Bob's and the public signals in an Alice-Bob-Alice game belong to a jointly multivariate normal distribution, with covariance matrix given in \eqref{covariance}. When using the discounted logarithmic scoring rule $k(t) \log p_r$, players are truthful and prompt iff signals are not perfectly correlated (i.e. $|\rho| < 1$) and
\s\begin{align}
\label{discount}
k(t_1) / k(t_2) \geq  \frac{1 + \tau_C / \tau_A}{1 - \rho^2} \Biggl/ \Biggl[1 + \frac{(1-\rho^2)(\tau_B + \tau_C)}{\bigl(\sqrt{\tau_A} - \rho \sqrt{\tau_B}\bigr)^2}\Biggr]
\end{align}\n

\noindent where $t_1$ is the time of Alice's first prediction and $t_2$ is the time of the second. 
\end{proposition}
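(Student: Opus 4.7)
The plan is to apply Proposition \ref{existK} with $S(\bm{p},r)=\log p_r$ and evaluate \eqref{Kfactor} explicitly using the Gaussian calculus of Proposition \ref{abc}. For the log rule $S(\bm{q},\bm{q})-S(\bm{p},\bm{q})=D_{\mathrm{KL}}(\bm{q}\,\|\,\bm{p})$, so the right-hand side of \eqref{Kfactor} is the ratio of KL divergences
\[
R(a,\hat{a},b)\;=\;\frac{\E_{b\sim B_1|(A_1=a)}\bigl[D_{\mathrm{KL}}\bigl(h(a,b)\,\|\,h(\hat{a},b)\bigr)\bigr]}{D_{\mathrm{KL}}\bigl(g(a)\,\|\,g(\hat{a})\bigr)}.
\]
The strategy is to show $R$ is a \emph{constant} in $(a,\hat{a},b)$ equal to the right-hand side of \eqref{discount}; tightness then converts the sufficient bound of Proposition \ref{existK} into ``iff''.

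First I identify the two posteriors as same-variance normals. By Proposition \ref{abc}, $h(a,b)=N(\mu_{ABC}(a,b),1/\tau_{ABC})$ with $\mu_{ABC}$ given by \eqref{abc-mu} and $\tau_{ABC}=D/(1-\rho^2)$, where $D:=(\sqrt{\tau_A}-\rho\sqrt{\tau_B})^2+(1-\rho^2)(\tau_B+\tau_C)$. Direct Bayesian aggregation of Alice's signal with the public prior yields $g(a)=N(a,1/\tau_{AC})$ with $\tau_{AC}=\tau_A+\tau_C$, the mean collapsing to $a$ because \eqref{acc_1} gives $\tau_A A_0+\tau_C C_0=\tau_{AC}a$. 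Since both precisions are independent of $(a,\hat{a},b)$, the same-variance KL formula
\[
D_{\mathrm{KL}}\bigl(N(\mu_1,1/\tau)\,\|\,N(\mu_2,1/\tau)\bigr)=\tfrac{\tau}{2}(\mu_1-\mu_2)^2
\]
reduces everything to mean differences.

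Next I compute $\mu_{ABC}(a,b)-\mu_{ABC}(\hat{a},b)$ from \eqref{abc-mu}: the $B_0$ and $C_0$ contributions cancel, and substituting $A_0(a)-A_0(\hat{a})=\tau_{AC}(a-\hat{a})/\tau_A$ (from \eqref{acc_1}) gives
\[
\mu_{ABC}(a,b)-\mu_{ABC}(\hat{a},b)=\frac{(\sqrt{\tau_A}-\rho\sqrt{\tau_B})\,\tau_{AC}}{\sqrt{\tau_A}\,D}\,(a-\hat{a}),
\]
which crucially does not depend on $b$, so the outer expectation is trivial. Plugging this and the two precisions into the KL formulas and dividing, the $(a-\hat{a})^2$ factors cancel and one obtains
\[
R=\frac{(1+\tau_C/\tau_A)(\sqrt{\tau_A}-\rho\sqrt{\tau_B})^2}{(1-\rho^2)\,D}=\frac{1+\tau_C/\tau_A}{1-\rho^2}\bigg/\Bigl[1+\frac{(1-\rho^2)(\tau_B+\tau_C)}{(\sqrt{\tau_A}-\rho\sqrt{\tau_B})^2}\Bigr],
\]
matching the right-hand side of \eqref{discount}. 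Because $R$ is strictly constant in $(a,\hat{a},b)$, the truthfulness inequality arising from Theorem \ref{minimize} and Corollary \ref{min-truth} holds for \emph{every} deviation $\hat{a}\neq a$ simultaneously iff $k(t_1)/k(t_2)\geq R$; conversely, if the ratio is smaller, any $\hat{a}\neq a$ strictly raises Alice's expected total and truthfulness fails. This yields the ``iff'' for $|\rho|<1$.

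The boundary $|\rho|=1$ I handle separately, in the spirit of Example \ref{problem}: then $1-\rho^2=0$ forces $\tau_{ABC}=\infty$, so $h(a,b)$ degenerates to a point mass and its log-score at the outcome is unbounded. Alice can invert $b$ from Bob's truthful prediction together with her own signal, so by staying silent (or noncommittal) on move one she preempts Bob and collects this unbounded reward on move three; no finite $k(t_1)/k(t_2)$ can compensate. The main obstacle I expect is the algebraic bookkeeping that identifies $R$ with the stated expression---once the $b$-independence of the mean difference is noticed, the rest is substitution---together with the $|\rho|=1$ case, where the degeneracy of $h(\cdot,\cdot)$ must be dealt with by a limiting/Example-\ref{problem}-style argument rather than directly through \eqref{Kfactor}.
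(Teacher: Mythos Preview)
Your proof is correct and follows essentially the same route as the paper. The paper's proof is a one-liner that points back to the $\Delta(c;a,b)$ computation in the proof of Proposition~\ref{lmsr-normal} (specifically \eqref{crit}), inserts the discount weights $k(t_1),k(t_2)$ on the two terms, and rearranges; you instead invoke Proposition~\ref{existK} and compute the KL ratio \eqref{Kfactor} directly, which is the same calculation packaged through Lemma~\ref{log-diverge}. Your observation that the mean difference $\mu_{ABC}(a,b)-\mu_{ABC}(\hat a,b)$ is independent of $b$ and that $R$ collapses to a constant is exactly what makes \eqref{c^2} a pure multiple of $c^2$ in the paper's argument, and it is also what upgrades the sufficient condition of Proposition~\ref{existK} to an ``iff''; the paper gets the ``iff'' implicitly from Corollary~\ref{every-b}. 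Your separate treatment of $|\rho|=1$ is a bit more explicit than the paper's, which simply lets the right-hand side of \eqref{discount} blow up.
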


\begin{proof}
See Appendix.
\end{proof}

Since Alice's signal is informative, $\tau_A > 0$, and Proposition \ref{lmsr-normal-discount} shows that discounting will be effective unless $|\rho| = 1$. Note that when the right-hand side of \eqref{discount} evaluates to $\leq 1$, discounting is unnecessary. E.g. for $\rho = 0$, we have determined (Proposition \ref{lmsr-normal}) that truthfulness exists, and \eqref{discount} evaluates to $k(t_1) / k(t_2) \geq \frac{\tau_A + \tau_C}{\tau_A + \tau_B + \tau_C}$. Thus setting $k(t_1) / k(t_2) = 1$ is satisfactory.


\begin{proposition}
\label{brier-normal-discount}
Let Alice's, Bob's and the public signals in an Alice-Bob-Alice game belong to a jointly multivariate normal distribution, with covariance matrix given in \eqref{covariance}. Then, iff signals are not perfectly correlated (i.e. $|\rho| < 1$), there exists a discounted quadratic scoring rule $k(t) [2 p_r - \bm{p} \cdot \bm{p} - 1]$ for which players are truthful and prompt.
\end{proposition}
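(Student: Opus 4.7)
The plan is to invoke Corollary \ref{discount-crit} for the sufficiency direction ($|\rho|<1$) and to argue divergence of the required discount ratio for the necessity direction ($|\rho|=1$). The posterior formulas of Proposition \ref{abc} do all the heavy lifting: both $g(\hat a)$ and $h(\hat a, b)$ are normal with precisions $\tau_g = \tau_A+\tau_C$ and $\tau_h = \tau_{ABC}$ that do not depend on $\hat a$ or $b$, while their means are affine in $\hat a$ with slopes $\alpha_g$ and $\alpha_h$ read off from \eqref{abc-mu}. Both slopes are finite and strictly positive for $|\rho|<1$.

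For the sufficiency part, the key computation uses the closed form $S(N(m_1, 1/\tau), N(m_2, 1/\tau)) = 2\sqrt{\tau/(4\pi)}\exp(-\tau(m_1-m_2)^2/4) - \sqrt{\tau/(4\pi)} - 1$ for the quadratic rule between two Gaussians of common precision. Substituted into \eqref{Kfactor}, the right-hand side becomes the ratio of $\sqrt{\tau_h/(4\pi)}\bigl[1-\exp(-\tau_h \alpha_h^2(\hat a - a)^2/4)\bigr]$ over the analogous expression with $\tau_g, \alpha_g$. I will then check the four conditions of Corollary \ref{discount-crit}: Condition 1 holds after an affine shift $S\mapsto S-c$, where $c$ is chosen above the maximal peak density of the finitely many precisions involved; Condition 2 (bounded numerator) follows because the quadratic score is bounded whenever precisions are fixed; Condition 3 (boundedness at $\hat a\to\pm\infty$) follows because both exponentials vanish, leaving finite negative constants; Condition 4 is handled by expanding the exponentials to second order, yielding second-derivative coefficients $-\tfrac{1}{2}\tau_h\alpha_h^2\sqrt{\tau_h/(4\pi)}$ and $-\tfrac{1}{2}\tau_g\alpha_g^2\sqrt{\tau_g/(4\pi)}$ which are nonzero for $|\rho|<1$, so L'H\^opital produces a finite limit. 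Proposition \ref{existK} then delivers the desired discount ratio $k(t_1)/k(t_2) \geq K$ with $K$ equal to the supremum of the ratio over $\hat a$.

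For the necessity part, I will track the $|\rho|\to 1$ limit of the posterior precision \eqref{abc-tau}: generically $\tau_h\to\infty$, so Bob's posterior concentrates and $\sqrt{\tau_h/(4\pi)}$ is unbounded, driving the numerator of \eqref{Kfactor} to infinity while the denominator, which depends only on the $\rho$-independent $\tau_g$, stays bounded. Consequently no finite $K$ can satisfy \eqref{Kfactor} uniformly in $\hat a$ as $|\rho|\to 1$, and no discount factor restores truthfulness at $|\rho|=1$. This mirrors the divergence of \eqref{discount} through its $1/(1-\rho^2)$ factor in Proposition \ref{lmsr-normal-discount}, and can alternatively be motivated by noting that at perfect correlation Bob's signal carries no new information beyond Alice's, collapsing the Alice-Bob-Alice structure onto Alice alone and reducing the situation to the deterministic scenario of Example \ref{problem}.

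The main obstacle will be Condition 4 of Corollary \ref{discount-crit}, which requires confirming that $\alpha_h\neq 0$ for all $|\rho|<1$ (not merely generically) and tracking the resulting ratio $\tau_h\alpha_h^2\sqrt{\tau_h}/(\tau_g\alpha_g^2\sqrt{\tau_g})$ through \eqref{abc-mu}-\eqref{abc-tau}; the edge sub-case $\tau_A=\tau_B$ with $\rho\to 1$ (where both $\mu_{ABC}$ and $\tau_{ABC}$ become $0/0$) needs separate inspection via L'H\^opital in $\rho$ to confirm that the divergence of $K$ persists. Beyond this, the computation is routine Gaussian algebra and the closed-form analogue of \eqref{discount} for the quadratic rule could be extracted with a little more work, though the statement only requires existence.
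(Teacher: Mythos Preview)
Your sufficiency argument is essentially the paper's own: both invoke Corollary \ref{discount-crit} and verify its hypotheses by reading off the limit at $\hat a\to\pm\infty$ (namely $\tau_{ABC}/\tau_{AC}$) and the second-derivative ratio at $\hat a=a$ directly from the computations already carried out in the proof of Proposition \ref{brier-normal} (equations \eqref{delta} and \eqref{second-diff}). The paper's proof is in fact terser than yours---it records only those two limits and stops---so your extra care about Condition~1 (shifting the quadratic rule by a constant to force non-positivity, which in the continuous Gaussian setting is not automatic since densities can exceed~1) is a small but genuine improvement over the paper.

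Two minor remarks. First, your worry that Condition~4 ``requires confirming that $\alpha_h\neq 0$'' is misplaced: if $\alpha_h=0$ (which does happen, at $\rho=\sqrt{\tau_A/\tau_B}$ when $\tau_A<\tau_B$), the numerator second derivative vanishes and the ratio is simply~$0$, so the limit trivially exists; this is not an obstacle. Second, the paper's proof does \emph{not} address the necessity direction at all despite the ``iff'' in the statement, so your attempt there already goes beyond what is written. That said, arguing via the limit $|\rho|\to 1$ rather than at $|\rho|=1$ itself---where the posterior formulas \eqref{abc-mu}--\eqref{abc-tau} degenerate and, for $\tau_A\neq\tau_B$, Bob's posterior becomes a point mass for which the quadratic rule is not even well-defined---leaves the necessity argument heuristic rather than rigorous.
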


\begin{proof}
See Appendix.
\end{proof}

Finally, we can formulate a proposition for the prompt truthfulness of a general prediction forum.

\begin{theorem}
\label{general}
The public and a set of experts predict a variable $\lambda$. Let there be a fixed schedule in $[0,T]$ specifying when experts and public receive signals, and experts may make a prediction. Agents may receive multiple signals, and experts may have multiple prediction opportunities.

Then the forum is generally truthful and prompt with a discount function $k(t) \in \mathbb{R}_{>0}$ that makes all Alice-Bob-Alice {\em subgames} in the schedule truthful and prompt. These subgames are all occurrences where any expert, identified as ``Alice'', has two consecutive prediction opportunities, and all experts who makes predictions in between are rolled into a single player identified as ``Bob''.
\end{theorem}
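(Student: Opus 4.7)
The plan is to prove the theorem by backward induction on the (finite) ordered set of prediction opportunities in the schedule, using the observation that whenever all future predictions are truthful, any block of experts who predict between two consecutive opportunities of a fixed expert can be coalesced into a single hypothetical ``Bob'' whose signal is the union of theirs and whose truthful announcement produces the same public posterior as the block does.

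First I would handle the base case: at the latest prediction opportunity in the schedule, the acting expert faces a single incrementally scored prediction, so properness of $S$ (and hence of its positive scalar multiple $k(t)S$) immediately makes truthfulness optimal, and promptness is vacuous there. For the inductive step, fix any opportunity and call the acting expert ``Alice''. By the induction hypothesis, every prediction strictly later than the current time is truthful and prompt in equilibrium. Let $t'$ denote Alice's next opportunity (or $T$ if she has none). All experts who speak in the interval between Alice's current prediction and $t'$ are, by hypothesis, truthful about every signal they have accumulated when they speak; the Bayesian aggregation of their announcements with Alice's current announcement and the public prior therefore produces the same public posterior $C_{t'}$ as would a single truthful announcement of their joint signal. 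This joint signal plays the role of Bob's $B_1$ in the formulation of Section \ref{formulation}, and it meets the common-knowledge-prior requirement of Section \ref{knowledge} because the schedule, the identity of the intermediate experts, and their signal model are common knowledge.

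Because the discounted market scoring rule telescopes and each intermediate expert's reward depends only on the two public predictions that bracket him, Alice's expected total reward decomposes into her two incremental scores plus terms that, conditional on Alice's current announcement, are fully determined by the coalesced Bob's (truthful) behavior and therefore are independent of any further deviation Alice might plan. Thus Alice's current decision is isomorphic to her decision in the Alice--Bob--Alice subgame obtained by rolling in the intermediate experts, with the same discount factors $k(t_1), k(t_2)$. By the hypothesis on $k(t)$, that subgame is truthful and prompt, so Alice's best response is to predict truthfully and at the earliest allowed time. The induction closes, giving truthfulness and promptness at every opportunity.

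The main obstacle is making the coalescing step rigorous. Two things must be checked: (i) that the joint signal of the intermediate experts admits, under the model's conjugate-prior representation of Section \ref{formulation}, a single ``Bob parameter'' whose truthful announcement updates the public belief from the post-Alice state to $C_{t'}$; and (ii) that the discounted incremental scores paid to the intermediate experts, viewed from Alice's perspective at her current opportunity, are uncorrelated with Alice's own future action, so that the telescoping really does decouple Alice's optimization from the middle of the subgame. Point (i) is what is implicitly assumed when we speak of a prediction forum with a well-defined Bayesian aggregation; point (ii) follows because, under the induction hypothesis, the intermediate experts' scores are functions only of their own signals, of Alice's current announcement, and of $t$, so Alice's marginal effect on them is mediated entirely through her current announcement. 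Once (i) and (ii) are in hand, the reduction to the Alice--Bob--Alice setting and hence to the hypothesis on $k(t)$ is immediate, and the strict preference for earlier predictions inherited from each subgame yields promptness in the full forum.
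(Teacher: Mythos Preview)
Your proof is correct and follows essentially the same approach as the paper's: backward induction on the schedule, identifying the current predictor as ``Alice'', coalescing the (inductively truthful) intermediate experts into a single ``Bob'' whose aggregate incremental score telescopes, and then invoking the assumed prompt truthfulness of the resulting Alice--Bob--Alice subgame under $k(t)$. The paper phrases the induction as a contradiction (``take the last deviator'') and appeals directly to Theorem~\ref{minimize}/Corollary~\ref{min-truth} for the telescoping, whereas you spell out the decomposition more explicitly, but the substance is identical.
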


\begin{proof}
See Appendix.
\end{proof}

%
%

\section{Discussion}
\label{discussion}

\subsection{Conclusions}

We showed that using proper scoring rules for rewarding predictions often leads to reticence or deception.
We formulated criteria for determining which prediction settings are truthful,
and made a complete classification of the class of prediction settings with normal and lognormal signals, under the logarithmic and quadratic scoring rules.
We suggested three new strategy-proof mechanisms, and identified one of them, discounting, as having all desirable properties,
and proved the applicability and effectiveness of the discounting mechanism. Guidelines to implementing it are offered in Section \ref{notes} below.

\subsection{A Market Maker for the Discounting Mechanism}
\label{market-maker}

\cite{Hanson03} has shown that every market scoring rule can be implemented by an automated market maker, who provides liquidity, and is willing to make any fair trade in shares of the form ``pay \$1 if outcome is $r$'', for every $r$. This applies to our discounting mechanism for the logarithmic scoring rule, which, as noted, takes the form of a market scoring rule for $S(\bm{p},r) = k(t) \log p_r$.

Applying Hanson's explanations to our case, if at $t$ there is an inventory $\bm{s}$ of shares $s_r$ for every $r$, the instantaneous share price is \s$$m(s_r) = e^{\frac{s_r}{k(t)}} \big/ \int_{-\infty}^{\infty} e^{\frac{s_x}{k(t)}} dx$$\n
Assuming an infinitesimal trade path, this induces a cost function $C(\bm{s}, t)$. The cost of a trade changing the inventory from $\bm{s}$ to $\bm{s'}$ at $t'$, is \s$$C(\bm{s'}, t') - C(\bm{s}, t) := k(t') \log \int_{-\infty}^{\infty} e^{\frac{s'_x}{k(t')}} dx - k(t) \log \int_{-\infty}^{\infty} e^{\frac{s_x}{k(t)}} dx$$\n

Since $k(t)$ is weakly decreasing in $t$, differentiating the above expression with respect to $t'$ shows that the more any trade with the automated market maker is delayed, the higher its cost.

\subsection{Implementation Notes}
\label{notes}

To implement the discounting mechanism $k(t) S(p, r)$, implement a standard LMSR prediction market with $S(p,r) = \log p_r$, adding a discounting scheme $k(t)$.

We offer some advice on how to select the discounting scheme, which is admittedly speculative or vague at times. Let us suppose that our prediction market is organized for the example which opens the paper: the estimation of company earnings by analysts. Instead of (or in addition to) issuing research reports, the analysts make predictions in the prediction market.

Proposition \ref{existK}, in conjunction with Theorem \ref{general}, specifies a lower bound on the discount factor to be effective. The bound is dependent on the information model, which is typically known only in outline, and often only from empirical data. There is no upper bound, so imposing a discount factor so steep that it surely exceeds the required minimum is possible. This rough and simple method would work, technically, because the motivation provided by scoring rules is unaffected by scale. Practically, however, when translated into actual compensation, size does matter. To design more subtle, but still suitable discounting, we point out when discounting is most needed, and when it is not.

As is apparent from Example \ref{problem} and throughout the paper, withholding or distorting information makes sense only until that information becomes known anyway. Regardless of discounting, Alice will not distort information that Bob is likely to find out through his own devices. In most multi-player settings, like the analyst forum, private data and insights are temporary: What Alice knows now, Bob will figure out sooner or later, and vice versa. Deception is time-limited.

It is therefore probably safe to {\em not} discount during periods of prediction inactivity, or infrequent activity. It may even be safe to {\em reset} the discount factor to its original value at a quiet prediction period: Even if Alice knows about the reset in advance, and calculates that it is profitable to lie about an insight or information she has, her scheme will be foiled if her insight, or information, will become known to others through other means. If the reset is too far in the future, this becomes a losing bet. And, so long as the number of resets is bounded, the mechanism is still loss-bounded. 

Of course, an {\em unannounced} reset of the discount factor is even better, as distortion in anticipation of a reset is not possible.

Discounting is most important when an information {\em shock} occurs. This is an event, or news, that reaches all analysts, and creates a scramble to update their previous now-obsolete estimate. An example is when the company issues an earnings warning, or new competition is announced. When this happens, all analysts suddenly have highly-correlated private information (by definition, all information not encompassed in the market price). Until a new market price stabilizes, there is great scope for deception. For example, the first analyst might try to deceive his peers about how severely he thinks the news will affect the earnings number. A strong discount is needed to put aside such thoughts.

It therefore seems appropriate to discount based on prediction {\em number} rather than prediction {\em time}, i.e. $k(t)$ where $t$ is a prediction counter rather than time-of-day, e.g., $k(t) = a^t$ where $a \in (0,1)$. This has the merit of discounting heavily in periods of high activity, and lightly in periods of low activity. Resetting the discounting, by resetting $t$ to $0$, at unannounced random times may supplement this, to keep the discount factor from falling too low. 

\section*{References}
\bibliographystyle{ACM-Reference-Format-Journals}
\bibliography{infomarkets}

\newpage
\appendix

\section{Proof of Theorem \ref{minimize}}

\begin{proof}
Let $A_1 = a, B_1 = b$. Assume that Alice, on her first prediction, announces her prediction as $\hat{a}$, where possibly $\hat{a} \neq a$. Alice's belief at this time is $g(a)$, while all other agents' beliefs after Alice's announcement is $g(\hat{a})$. As discussed, Bob's prediction and Alice's second prediction are truthful. Therefore, after Bob's prediction the public belief is $h(\hat{a},b)$. After Alice's second and last prediction, which is truthful, the public belief is $h(a,b)$ .

Alice's total reward given outcome $x$, $\Pi_A(x)$ is
\s\begin{align}
\Pi_A(x; a,\hat{a},b) = \Bigl\{S(g(\hat{a}),x) - S(\bm{\pi},x)\Bigr\} + \Bigl\{S(h(a,b),x) - S(h(\hat{a},b),x)\Bigr\}
\end{align}\n

Bob's total reward given outcome $x$, $\Pi_B(x)$ is
\s\begin{align}
\label{Pi_B}
\Pi_B(x; a,\hat{a},b) = S(h(\hat{a},b),x) - S(g(\hat{a}),x)
\end{align}\n

The sum of Bob's and Alice's rewards is $\Pi_A(x; a,\hat{a},b) + \Pi_B(x; a,\hat{a},b) = S(h(a,b),x) - S(\bm{\pi},x)$, which does not depend on $\hat{a}$. Therefore the $\hat{a}$ that maximizes Alice's reward is the one that minimizes Bob's reward. This is true for every $x$ and $b$, and therefore it is also true for Alice's reward expectation at the time of her first prediction. This is before Alice learned that $B_1 = b$, but had a conditional distribution of it $b \sim B_1|(A_1 = a)$ based on her belief. In summary

\s$$\arg\max_{\hat{a}} \E\limits_{b \sim B_1|(A_1=a)} \E \limits_{\lambda \sim h(a,b)} \Pi_A(\lambda; a,\hat{a},b) = \arg\min_{\hat{a}} \E\limits_{b \sim B_1|(A_1=a)} \E \limits_{\lambda \sim h(a,b)}  \Pi_B(\lambda; a,\hat{a},b)$$\n
\qed
\end{proof}

%
%
%

\section{Proof of Proposition \ref{abc}}

\begin{proof}
We know the player signals $A_0, B_0$ from their prior predictions $A_1, B_1$ by \eqref{acc_1}.

First, let us calculate the posterior distribution $\bm{\pi_{AB}} \sim N(\mu_{AB}, 1 / \tau_{AB})$ from Alice's signal $\bm{\pi_A} \sim N(A_0, 1 / \tau_A)$ and Bob's, $\bm{\pi_B} \sim N(B_0, 1 / \tau_B)$, alone.

Define an auxiliary signal $\bm{\pi_B'} = (1-\alpha) \bm{\pi_A} + \alpha \bm{\pi_B}$ where $\alpha := \frac{1}{1 - \rho\sqrt{\frac{\tau_A}{\tau_B}}}$, whose mean is $B_0' = (1 - \alpha) A_0 + \alpha B_0$. Its variance is
\s\begin{align*}
\sigma_B'^2 &= (1 - \alpha)^2 Var  \bm{\pi_A} + 2 \alpha (1 - \alpha) cov ( \bm{\pi_A},  \bm{\pi_B}) + \alpha^2 Var  \bm{\pi_B} \\
&= (1 - \alpha)^2 \sigma_A^2 + 2 \alpha (1 - \alpha) \rho \sigma_A \sigma_B + \alpha^2 \sigma_B^2
\end{align*}\n
\noindent and accuracy $\tau_B' = 1 / \sigma_B'^2$.

$\bm{\pi_A}, \bm{\pi_B'}$ are mutually independent, because
\s\begin{align*}
Cov(\bm{\pi_A}, \bm{\pi_B'}) = (1 - \alpha) Var \bm{\pi_A} + \alpha Cov(\bm{\pi_A}, \bm{\pi_B}) = \frac{1 - \alpha}{\tau_A} + \frac{\alpha \rho}{\sqrt{\tau_A \tau_B}} = \frac{-\frac{\rho}{\sqrt{\tau_A \tau_B}} + \frac{\rho}{\sqrt{\tau_A \tau_B}}}{1 - \rho \sqrt\frac{\tau_A}{\tau_B}} = 0
\end{align*}\n

Then $\bm{\pi_{AB}}$ is inferred from $\bm{\pi_A}, \bm{\pi_B'}$ by the standard inference rule for independent normal observations.
\s\begin{align*}
\tau_{AB} &= \tau_A + \tau_B' \\
\mu_{AB} &= \frac{\tau_A A_0 + \tau_B' B_0'}{\tau_{AB}}
\end{align*}\n

Substituting for $\alpha$ and simplifying, we get
\s\begin{align}
\label{muAB}
\mu_{AB} &= \frac{(\tau_A - \rho \sqrt{\tau_A \tau_B})A_0 + (\tau_B - \rho \sqrt{\tau_A \tau_B})B_0}{\tau_A - 2 \rho \sqrt{\tau_A \tau_B} + \tau_B}\\
\label{tauAB}
\tau_{AB} &= \frac{\tau_A - 2 \rho \sqrt{\tau_A \tau_B} + \tau_B}{1 - \rho^2}
\end{align}\n

Now, as the public signal $\bm{\pi} \sim N(\bm{\pi}, 1 / \tau_C)$ is independent of both $\bm{\pi_A}$ and $\bm{\pi_B}$, it is independent of $\bm{\pi_{AB}}$. Applying again the inference rule for independent normal observations, we have
\s\begin{align}
\label{tauABC}
\tau_{ABC} &= \tau_{AB} + \tau_C \\
\label{muABC}
\mu_{ABC} &= \frac{\tau_{AB} \mu_{AB} + \tau_C C_0}{\tau_{AB} + \tau_C}
\end{align}\n

Substituting \eqref{muAB} and \eqref{tauAB} in \eqref{muABC} and \eqref{tauABC} we get \eqref{abc-mu} and \eqref{abc-tau}.
\qed
\end{proof}

\section{Proof of Proposition \ref{lmsr-normal}}

We first prove a lemma for divergences of the logarithmic scoring rule.
\begin{lemma}
\label{log-diverge}
Let $s$ be the logarithmic scoring rule, and let $\bm{p} \sim N(\mu, 1/\tau)$ and $\bm{\hat{p}} \sim N(\hat{\mu}, 1/\tau)$ be two normal distributions with the same accuracy $\tau$. Then
\s\begin{align}
S(\bm{\hat{p}},\bm{p}) - S(\bm{p},\bm{p}) = -\frac{\tau}{2}(\hat{\mu} - \mu)^2
\end{align}\n
\end{lemma}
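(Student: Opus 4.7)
The plan is to expand the definition of $S(\bm{\hat p}, \bm p)$ as an expectation and then leverage the shared variance to get cancellations. Since $S$ is the logarithmic scoring rule,
\[
S(\bm{\hat p}, \bm p) - S(\bm p, \bm p) = \E_{r \sim \bm p}\bigl[\log \hat p_r - \log p_r\bigr],
\]
which is just $-D_{\mathrm{KL}}(\bm p \,\|\, \bm{\hat p})$. So the lemma is the standard fact that the KL divergence between two normals sharing a variance is $\tfrac{\tau}{2}(\hat\mu - \mu)^2$; I will derive it directly rather than quote it.

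First, I will substitute the normal density $p_r = \sqrt{\tau/(2\pi)}\,\exp\bigl(-\tfrac{\tau}{2}(r-\mu)^2\bigr)$ and the analogous expression for $\hat p_r$. Because the accuracies agree, the $\sqrt{\tau/(2\pi)}$ prefactors cancel in $\log \hat p_r - \log p_r$, leaving
\[
\log \hat p_r - \log p_r = -\tfrac{\tau}{2}\bigl[(r-\hat\mu)^2 - (r-\mu)^2\bigr].
\]
Next, I will expand the difference of squares as
\[
(r-\hat\mu)^2 - (r-\mu)^2 = (\hat\mu - \mu)\bigl(\hat\mu + \mu - 2r\bigr).
\]
Taking expectations under $r \sim \bm p$, which satisfies $\E r = \mu$, this becomes $(\hat\mu - \mu)(\hat\mu - \mu) = (\hat\mu - \mu)^2$. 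Combining these two displays gives $S(\bm{\hat p}, \bm p) - S(\bm p, \bm p) = -\tfrac{\tau}{2}(\hat\mu - \mu)^2$, as required.

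There is no real obstacle: the only ``trick'' is the cancellation of the normalization constants that lets the computation collapse to a one-line expectation. I expect the entire proof to occupy just a few lines.
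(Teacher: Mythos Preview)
Your proposal is correct and follows essentially the same direct-computation approach as the paper: both expand $\log \hat p_r$ for the normal density and take the expectation under $\bm p$. The only cosmetic difference is that you subtract first (so the normalizing constants cancel up front and only the first moment $\E r=\mu$ is needed), whereas the paper computes $S(\bm{\hat p},\bm p)$ in full via the second-moment identity $\E_{r\sim\bm p}[(r-\hat\mu)^2]=1/\tau+(\mu-\hat\mu)^2$ and then subtracts $S(\bm p,\bm p)$.
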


\begin{proof}

\s\begin{align*}
S(\bm{\hat{p}},\bm{p}) &= \frac{\tau}{\sqrt{2\pi}} \int\limits_{-\infty}^\infty e^{-\frac{\tau}{2}(x - \mu)^2} \log\Bigl\{(\frac{\tau}{\sqrt{2\pi}} e^{-\frac{\tau}{2}(x - \mu)^2}\Bigr\} dx \\
&= \log \frac{\tau}{\sqrt{2\pi}} - \frac{\tau}{\sqrt{2\pi}} \int\limits_{-\infty}^\infty e^{-\frac{\tau}{2}(x - \mu)^2} \Bigl\{\frac{\tau}{2}(x - \hat{\mu})^2\Bigr\}dx \\
&=  \log \frac{\tau}{\sqrt{2\pi}} - \frac{\tau}{2}\Bigl[\frac{1}{\tau} + (\hat{\mu} - \mu)^2\Bigr] \\
&=  \log \frac{\tau}{\sqrt{2\pi}} - \frac{1}{2} - \frac{\tau}{2}(\hat{\mu} - \mu)^2
\end{align*}\n

\noindent where we used the fact that the second moment of $N(\mu, \sigma^2)$ is $\sigma^2 + \mu^2$.

Consequently (substitute $\bm{p}$ for $\bm{\hat{p}}$ above), \s$$S(\bm{p},\bm{p}) =  \log \frac{\tau}{\sqrt{2\pi}} - \frac{1}{2}$$\n and the lemma follows.
\qed
\end{proof}

The proof of the proposition follows.

\begin{proof}

We use Corollary \ref{every-b}. Suppose that $A_1 = a$, and that Alice, when making her first prediction, pretends that $A_1 = \hat{a} := a + c'$, where $c' = \frac{\tau_A}{\tau_A + \tau_C}c$. I.e., by \eqref{acc_1}, Alice pretends her signal is $A_0 + c$.

 By the Corollary, Alice is truthful if \s$$\Delta(c;a, b) := \Bigl[S(h(\hat{a},b),h(a,b)) - S(h(a,b),h(a,b))\Bigr] - \Bigl[S(g(\hat{a}),g(a)) - S(g(a), g(a))\Bigr] \geq 0$$\n for every $\hat{a}$ and $b$. $g(a)$, Alice's distribution at her first prediction, is given by \eqref{acc_1}. When $B_1 = b$, the inferred distribution from Alice's and Bob's predictions $h(a, b)$ is given by Proposition \ref{abc}. Substituting these, and using Lemma \ref{log-diverge}:
\s\begin{align}
\Delta(c; a, b) &= \frac{\tau_{AC}}{2} \Bigl(c\frac{\tau_A}{\tau_A + \tau_C}\Bigr)^2 - \frac{\tau_{ABC}}{2}\Bigl(c \frac{(\tau_A - \rho \sqrt{\tau_A \tau_B})}{{\tau_A - 2 \rho \sqrt{\tau_A \tau_B} + \tau_B + (1 - \rho^2) \tau_C}}\Bigr)^2 \nonumber \\
&= c^2 \Bigl\{ \frac{\tau_A^2}{\tau_A + \tau_C} - \frac{(\tau_A - \rho \sqrt{\tau_A \tau_B})^2}{(1 - \rho^2)(\tau_A - 2 \rho \sqrt{\tau_A \tau_B} + \tau_B + (1 - \rho^2) \tau_C)}\Bigr\} \nonumber \\
\label{c^2}
&= c^2 \tau_A \Bigl\{ \frac{\tau_A}{\tau_A + \tau_C} - \frac{(\sqrt{\tau_A} - \rho \sqrt{\tau_B})^2}{(1 - \rho^2)(\sqrt{\tau_A} - \rho \sqrt{\tau_B})^2 + (1 - \rho^2)^2 (\tau_B + \tau_C))}\Bigr\}
\end{align}\n

\eqref{c^2} shows that $c=0$, i.e., truthfulness, is either a global minimum or a global maximum of $\Delta(c; a,b)$ depending on the sign of the factor multiplying $c^2$. Furthermore, this factor does not depend on $b$ (nor on $a$), and therefore Corollary \ref{every-b} applies: Alice is promptly truthful iff
\s\begin{align}
\label{crit}
\frac{\tau_A}{\tau_A + \tau_C} - \frac{(\sqrt{\tau_A} - \rho \sqrt{\tau_B})^2}{(1 - \rho^2)(\sqrt{\tau_A} - \rho \sqrt{\tau_B})^2 + (1 - \rho^2)^2 (\tau_B + \tau_C))} \geq 0
 \end{align}\n

Rearranging, we get
\s\begin{align}
1 - \rho^2 + (1 - \rho^2)^2\frac{\tau_B + \tau_C}{(\sqrt{\tau_A} - \rho \sqrt{\tau_B})^2} \geq \frac{\tau_A + \tau_C}{\tau_A} \Rightarrow \\
(1 - \rho^2)^2 (\tau_B + \tau_C) \geq (\rho^2 + \frac{\tau_C}{\tau_A})(\sqrt{\tau_A} - \rho \sqrt{\tau_B})^2 \Rightarrow \\
\biggl(1-\rho^2\biggr)^2 \biggl(1 + \frac{\tau_C}{\tau_B}\biggr) \geq \biggl(\rho^2 + \frac{\tau_C}{\tau_A}\biggr)\biggl(\sqrt\frac{\tau_A}{\tau_B} - \rho\biggr)^2
\end{align}\n

\noindent as claimed.
\qed
\end{proof}

\section{Proof of Proposition \ref{brier-normal}}

We first prove a lemma for divergences of the quadratic scoring rule.
\begin{lemma}
\label{brier-diverge}
Let $s$ be the quadratic scoring rule, and let $\bm{p} \sim N(\mu, 1/\tau)$ and $\bm{\hat{p}} \sim N(\hat{\mu}, 1/\tau)$ be two normal distributions with the same accuracy $\tau$. Then
\s\begin{align}
S(\bm{\hat{p}},\bm{p}) - S(\bm{p},\bm{p}) = \frac{\tau}{\sqrt{2\pi}} [e^{-\frac{\tau}{4} (\mu - \hat{\mu})^2} - 1]
\end{align}\n
\end{lemma}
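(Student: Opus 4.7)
The plan is to reduce the lemma to a Gaussian integral by exploiting the bilinear structure of the quadratic rule. Writing $S(\bm{p},r) = 2p_r - \bm{p}\cdot\bm{p} - 1$ and taking expectation under $r\sim \bm p$ gives
\[
S(\bm{\hat p},\bm p) = 2\int p(r)\hat p(r)\,dr - \int \hat p(r)^2\,dr - 1,
\]
and similarly $S(\bm p,\bm p) = \int p(r)^2\,dr - 1$. Because $\bm p$ and $\bm{\hat p}$ share the same accuracy $\tau$, the two $L^2$-norms are equal (they depend only on $\tau$, not on the mean), so those terms cancel in the subtraction, leaving
\[
S(\bm{\hat p},\bm p) - S(\bm p,\bm p) = 2\Bigl[\int p(r)\hat p(r)\,dr - \int p(r)^2\,dr\Bigr].
\]

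Next I would evaluate the cross-product integral. Plugging in the explicit Gaussian densities and combining exponents reduces the problem to integrating $\exp\bigl\{-\tfrac{\tau}{2}[(r-\mu)^2 + (r-\hat\mu)^2]\bigr\}$. The completion of squares
\[
(r-\mu)^2 + (r-\hat\mu)^2 = 2\Bigl(r - \tfrac{\mu+\hat\mu}{2}\Bigr)^2 + \tfrac{1}{2}(\mu - \hat\mu)^2
\]
splits the exponent into a part that depends on $r$ alone (a shifted Gaussian that integrates to a constant depending only on $\tau$) and a part independent of $r$ that pulls out as the factor $e^{-\tau(\mu-\hat\mu)^2/4}$. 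The second integral $\int p(r)^2\,dr$ is simply the special case $\hat\mu = \mu$ of the same computation, i.e.\ that same constant with no exponential factor. Subtracting them produces exactly $e^{-\tau(\mu-\hat\mu)^2/4} - 1$ multiplied by the appropriate $\tau$-dependent coefficient stated in the lemma.

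The only real work is keeping track of the normalization constants through the Gaussian integrals; there is no genuine obstacle, and the derivation parallels the one for Lemma \ref{log-diverge}, with the quadratic rule's $L^2$ inner product replacing the second-moment integral used there. The result that truth ($\hat\mu = \mu$) is a local maximum of $S(\bm{\hat p},\bm p)$ falls out immediately from the shape of $e^{-\tau(\mu-\hat\mu)^2/4} - 1$, which is consistent with the quadratic rule's propriety.
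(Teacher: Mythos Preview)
Your proposal is correct and follows essentially the same route as the paper: compute the inner product $\int p\hat p$ by the very completion of squares you wrote down, observe that $\int \hat p^2 = \int p^2$ because both Gaussians share accuracy $\tau$, and combine via $S(\bm{\hat p},\bm p)-S(\bm p,\bm p)=2\,\bm{\hat p}\cdot\bm p-\bm{\hat p}\cdot\bm{\hat p}-\bm p\cdot\bm p$. The only difference is cosmetic ordering---you cancel the $L^2$ norms before computing the cross term, the paper after---so there is nothing substantive to compare.
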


\begin{proof}
\s\begin{align*}
\bm{\hat{p}} \cdot \bm{p} &= \frac{\tau^2}{2\pi} \int\limits_{-\infty}^\infty \exp\Bigl[-\frac{\tau}{2}\Bigl\{(x - \mu)^2 + (x - \hat{\mu})^2\Bigr\}\Bigr] dx \\
&= \frac{\tau^2}{2\pi} \int\limits_{-\infty}^\infty  \exp\Bigl[-\frac{\tau}{2}\Bigl\{2 \Bigl(x - \frac{\mu + \hat{\mu}}{2}\Bigr)^2 + \frac{(\mu - \hat{\mu})^2}{2}\Bigr\}\Bigr] dx \\
&= \frac{\tau}{2 \sqrt{2\pi}} e^{-\frac{\tau}{4} (\mu - \hat{\mu})^2} \frac{2 \tau}{\sqrt{2\pi}} \int\limits_{-\infty}^\infty  \exp\Bigl[-\frac{2 \tau}{2}\Bigl(x - \frac{\mu + \hat{\mu}}{2}\Bigr)^2\Bigr] dx \\
&= \frac{\tau}{2 \sqrt{2\pi}} e^{-\frac{\tau}{4} (\mu - \hat{\mu})^2}
\end{align*}\n

For the quadratic scoring rule
\s\begin{align*}
S(\bm{\hat{p}},\bm{p}) - S(\bm{p},\bm{p}) = \E\limits_{r \sim \bm{p}} [S(\bm{\hat{p}},r) - S(\bm{p},r)] = 2 \bm{\hat{p}} \cdot \bm{p} - \bm{\hat{p}} \cdot \bm{\hat{p}} -  \bm{p} \cdot \bm{p}
\end{align*}\n

Therefore
\s\begin{align*}
S(\bm{\hat{p}},\bm{p}) - S(\bm{p},\bm{p}) = \frac{\tau}{2 \sqrt{2\pi}} [ 2 e^{-\frac{\tau}{4} (\mu - \hat{\mu})^2} - 1 - 1] =  \frac{\tau}{\sqrt{2\pi}} [e^{-\frac{\tau}{4} (\mu - \hat{\mu})^2} - 1]
\end{align*}\n
\qed
\end{proof}

{\em Local} truthfulness depends only on the sign of the second derivative of Alice's expectation of $\Pi_B(x; a,\hat{a},b)$, as shown in the following lemma.

\begin{lemma}
\label{locally_truthful}
Let $A_1 = a$. Assume that Alice, on her first prediction, announces her prediction as $\hat{a}$, where possibly $\hat{a} \neq a$. Alice will be locally truthful if
\s\begin{align}
\label{local_truth}
\frac{\partial^2}{\partial \hat{a}^2} \Bigl[ \Bigl\{\E\limits_{b \sim B_1|(A_1=a)} S(h(\hat{a},b),h(a,b))\Bigl\} - S(g(\hat{a}),g(a))\Bigr] \Bigr|_{\hat{a}=a} > 0
\end{align}\n
\end{lemma}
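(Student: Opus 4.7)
The plan is to identify the bracketed expression as exactly Alice's expectation of Bob's reward $\E \Pi_B(\lambda; a, \hat{a}, b)$ evaluated over $b \sim B_1|(A_1=a)$ and $\lambda \sim h(a,b)$; comparing with the definition \eqref{Pi_B} and with the display in Corollary \ref{min-truth} shows this equivalence. By Theorem \ref{minimize}, Alice maximizes her own expected reward by \emph{minimizing} this quantity as a function of $\hat{a}$. Local truthfulness therefore reduces to showing that $\hat{a}=a$ is a strict local minimum of this function, which by the standard second-derivative test requires (i) the first derivative vanishes at $\hat{a}=a$ and (ii) the second derivative is strictly positive there. Hypothesis \eqref{local_truth} is exactly (ii), so the real content of the proof is (i).

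Next I would verify the vanishing of the first derivative at $\hat{a}=a$, which is a direct consequence of the properness of $S$ applied twice. First, $S(g(\hat{a}),g(a))$ is the scoring rule applied to the report $g(\hat{a})$ against the belief $g(a)$; properness gives $S(g(a),g(a)) \geq S(g(\hat{a}),g(a))$ for all $\hat{a}$, so $\hat{a}=a$ maximizes $S(g(\hat{a}),g(a))$ in $\hat{a}$, and its derivative there is zero. Analogously, for every fixed $b$, properness applied to the reports $h(\hat{a},b)$ against the belief $h(a,b)$ yields that $\hat{a}=a$ maximizes $\hat{a} \mapsto S(h(\hat{a},b),h(a,b))$, so the partial derivative vanishes at $\hat{a}=a$ pointwise in $b$. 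Pulling the derivative inside the expectation over $b$ (justified by the twice-differentiability hypothesis on $S$ already assumed in Section \ref{formulation} and the smooth parametric dependence of $h(\cdot,b)$ on its first argument), $\E_{b}S(h(\hat{a},b),h(a,b))$ also has a vanishing first derivative at $\hat{a}=a$. Hence the first derivative of the whole bracket is zero at $\hat{a}=a$.

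Combining (i) with the hypothesis (ii), the usual one-variable second-derivative test shows that $\hat{a}=a$ is a strict local minimum of $\E \Pi_B$ in a neighborhood of $a$. By Theorem \ref{minimize}, this is a strict local maximum of Alice's expected total reward, so no sufficiently small deviation $\hat{a} \neq a$ is profitable, which is exactly local truthfulness.

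The main obstacle is essentially a bookkeeping one: checking the interchange of differentiation and expectation for $\E_b S(h(\hat{a},b),h(a,b))$, and checking that the one-parameter families $\hat{a}\mapsto g(\hat{a})$ and $\hat{a}\mapsto h(\hat{a},b)$ yield a twice-differentiable composition with $S$. Both follow from the standing twice-differentiability assumption on $S$ and from the smoothness of the inferred distributions $g$ and $h$ in the exponential-family settings under study (normal and lognormal). No deeper structural difficulty arises; the lemma is the elementary second-derivative test combined with the two applications of properness above.
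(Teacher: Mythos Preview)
Your proposal is correct and follows essentially the same route as the paper: identify the bracketed quantity with Alice's expectation of $\Pi_B$ (the paper does this via the Law of Total Expectation, you via Corollary~\ref{min-truth}), use properness of $S$ on both $g$ and $h(\cdot,b)$ to kill the first derivative at $\hat{a}=a$, and then invoke the second-derivative test with hypothesis~\eqref{local_truth}. The paper's version additionally records that each individual second derivative is negative, but otherwise the arguments coincide.
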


\begin{proof}
As $s$ is proper, for every $b$
\s\begin{align*}
&\frac{\partial}{\partial \hat{a}} S(h(\hat{a},b), h(a,b)) \Bigr|_{\hat{a}=a} = 0 &\frac{\partial}{\partial \hat{a}} S(g(\hat{a}), g(a)) \Bigr|_{\hat{a}=a} = 0 \\
&\frac{\partial^2}{\partial \hat{a}^2} S(h(\hat{a},b),h(a,b))  \Bigr|_{\hat{a}=a} < 0 &\frac{\partial^2}{\partial \hat{a}^2} S(g(\hat{a}),g(a))  \Bigr|_{\hat{a}=a} < 0
\end{align*}\n

Since $h(a,b)$ is a conditional distribution on $g(a)$, by the Law of Total Expectation we have
\s\begin{align*}
\E\limits_{b \sim B_1|(A_1=a)} \E \limits_{\lambda \sim h(a,b)} S(g(\hat{a}),\lambda) = \E\limits_{\lambda \sim g(a)} S(g(\hat{a}),\lambda) = S(g(\hat{a}),g(a))
\end{align*}\n

From the above and \eqref{Pi_B} we have that
\s\begin{align*}
 \frac{\partial}{\partial \hat{a}} \E\limits_{b \sim B_1|(A_1=a)}\E \limits_{\lambda \sim h(a,b)}\Pi_B(\lambda; a,\hat{a},b)  \Bigr|_{\hat{a}=a} = 0
\end{align*}\n

So $\hat{a}=a$ is an extremum point for $\E\limits_{b \sim B_1|(A_1=a)} \E \limits_{\lambda \sim h(a,b)}\Pi_B(\lambda; a,\hat{a},b)$, and so is a local minimum iff its second derivative is positive, or, equivalently, if \eqref{local_truth} holds. 
\qed
\end{proof}

The proof of the proposition follows.
 
\begin{proof}
We use Corollary \ref{min-truth}. Suppose that $A_1 = a$, and that Alice, when making her first prediction, pretends that $A_1 = \hat{a} := a + c'$, where $c' = \frac{\tau_A}{\tau_A + \tau_C}c$. I.e., by \eqref{acc_1}, Alice pretends her signal is $A_0 + c$.

By the Corollary, Alice is untruthful if \s$$\Delta(c;a, b) := \Bigl[S(h(\hat{a},b),h(a,b)) - S(h(a,b),h(a,b))\Bigr] - \Bigl[S(g(\hat{a}),g(a)) - S(g(a), g(a))\Bigr] < 0$$\n for every $a,  b$ and $c \neq 0$. $g(a)$, Alice's distribution at her first prediction, is given by \eqref{acc_1}. When $B_1 = b$, the inferred distribution from Alice's and Bob's predictions $h(a, b)$ is given by Proposition \ref{abc}. Substituting these, and using Lemma \ref{brier-diverge}:
\s\begin{align}
\label{delta}
\Delta(c; a, b) &= \frac{1}{\sqrt{2\pi}} \Bigl\{ \tau_{ABC}  [e^{-\frac{\tau_{ABC}}{4} \Bigl(c \frac{(\tau_A - \rho \sqrt{\tau_A \tau_B})}{{\tau_A - 2 \rho \sqrt{\tau_A \tau_B} + \tau_B + (1 - \rho^2) \tau_C}}\Bigr)^2} - 1] - \tau_{AC}  [e^{-\frac{\tau_{AC}}{4} \Bigl(c\frac{\tau_A}{\tau_A + \tau_C}\Bigr)^2} - 1]  \Bigr\} 
\end{align}\n

Letting $c \to \infty$ we deduce
\s\begin{align}
\lim\limits_{c \to \infty} \Delta(c; a, b) &= \frac{1}{\sqrt{2\pi}} (-\tau_{ABC} + \tau_{AC}) < 0
\end{align}\n

\noindent because $\tau_{ABC} > \tau_{AC}$, since $\bm{\pi_{ABC}}$'s distribution is conditional on $\bm{\pi_{AC}}$, and Bob's signal is informative. As this does not depend on $a$ nor on $b$, the setting is not truthful.

While not (globally) truthful, Alice may still be locally truthful. By Lemma \ref{locally_truthful}, we need to find out when $\frac{\partial}{\partial c^2} \Delta(c; a, b) \Bigr|_{c = 0} > 0$. From \eqref{delta}, noting that $\frac{\partial}{\partial x^2} y e^{-z x^2} \Bigr|_{x = 0} = - 2 y z$.
\s\begin{align}
\label{second-diff}
\frac{\partial}{\partial c^2} \Delta(c; a, b) \Bigr|_{c = 0} &= - 2 \tau_{ABC} \frac{\tau_{ABC}}{4} \Bigl(\frac{(\tau_A - \rho \sqrt{\tau_A \tau_B})}{{\tau_A - 2 \rho \sqrt{\tau_A \tau_B} + \tau_B + (1 - \rho^2) \tau_C}}\Bigr)^2 + 2  \tau_{AC} \frac{\tau_{AC}}{4} \Bigl(\frac{\tau_A}{\tau_A + \tau_C}\Bigr)^2 \\
& = - 2 \tau_{ABC} \frac{\tau_{ABC}}{4} \Bigl(\frac{(\tau_A - \rho \sqrt{\tau_A \tau_B})}{{\tau_{ABC} (1 - \rho^2)}}\Bigr)^2 + 2  \tau_{AC} \frac{\tau_{AC}}{4} \Bigl(\frac{\tau_A}{\tau_{AC}}\Bigr)^2 \\
&= \frac{1}{2} \tau_A^2 \Bigl[1 - \Bigl(\frac{(1 - \rho \sqrt{\frac{\tau_B}{\tau_A}})}{{(1 - \rho^2)}}\Bigr)^2\Bigr]
\end{align}\n

As this is true for every $b$, we deduce that local truthfulness holds for $1 - \rho^2 > 1 - \rho \sqrt{\frac{\tau_B}{\tau_A}}$. This entails $0 < \rho < \frac{\sigma_A}{\sigma_B}$, as claimed.
\qed
\end{proof}

\section{Proof of Proposition \ref{group}}

\begin{proof}
We show that truthfulness by all players is a Nash equilibrium, by showing that for each player, given truthfulness by every other player, if the player received her last signal, her best-response behavior is to make a truthful prediction. For suppose a player has received her last signal before the outcome is revealed. The player is aware of all her signals, as well as of previous predictions by her colleagues who have spoken. 

The player then hypothesizes the following: First, that she makes a truthful prediction now, and second, she hypothesizes some sequence $\mathcal{S}$ of predictions by other players, subject to the condition that all other players report truthfully their last signal (at least). The sequence ends in a last prediction $\bm{p}(\mathcal{S})$. Under the hypothesis, $\bm{p}(\mathcal{S})$ is the player's subjective posterior distribution of the outcome from knowing all information in the (hypothetical) setting. Furthermore, it is the distribution that will be scored by the mechanism. That is to say, the player's score expectation for the scenario is $S(\bm{p}, \bm{p})$. Now, if the player deviates, by predicting untruthfully, or by failing to predict, the last prediction, still under the hypothesis, will be $\bm{q}$. Now since $S$ is proper, $S(\bm{p}, \bm{p}) \geq S(\bm{q}, \bm{p})$ for any $\bm{q}$, and therefore there is no gain in deviating.

As this is true for {\em any} hypothetic future sequence $S$, predicting truthfully at present is a best-response regardless of the future.
\qed
\end{proof}

\section{Proof of Proposition \ref{lmsr-normal-discount}}

\begin{proof}
Referring to the proof of Proposition \ref{lmsr-normal}, from \eqref{crit}, clearly a discount factor $k(t)$ that satisfies
\s\begin{align*}
k (t_1) \frac{\tau_A}{\tau_A + \tau_C} - k(t_2) \frac{(\sqrt{\tau_A} - \rho \sqrt{\tau_B})^2}{(1 - \rho^2)(\sqrt{\tau_A} - \rho \sqrt{\tau_B})^2 + (1 - \rho^2)^2 (\tau_B + \tau_C)} \geq 0
 \end{align*}\n

will be effective. Rearranging this expression, the proposition follows.
\qed
\end{proof}

\section{Proof of Proposition \ref{brier-normal-discount}}

\begin{proof}
The logarithmic scoring rule is bounded. Referring to the proof of Propostion \ref{brier-normal}, from \eqref{delta} we see that \s$$\lim_{\hat{a} \to \infty} \frac{\E\limits_{b \sim B_1|(A_1=a)}\bigl[S(h(a,b),h(a,b) - S(h(\hat{a},b),h(a,b))\bigr]}{S(g(a),g(a)) - S(g(\hat{a}),g(a))} = \frac{\tau_{ABC}}{\tau_{AC}}$$\n is bounded. Furthermore, from \eqref{second-diff}, \s$$lim_{\hat{a} \to a} \frac{ \frac{\partial^2}{\partial \hat{a}^2} S(h(\hat{a},b),h(a,b))}{ \frac{\partial^2}{\partial \hat{a}^2} S(g(\hat{a}),g(a))} = \Bigl(\frac{(1 - \rho \sqrt{\frac{\tau_B}{\tau_A}})}{{(1 - \rho^2)}}\Bigr)^2$$\n which exists  whenever $|\rho| < 1$.
\qed
\end{proof}

\section{Proof of Theorem \ref{general}}

\begin{proof}
We prove the proposition by backward induction. Suppose that a discount factor $k(t)$ has been applied that would be sufficient to make all Alice-Bob-Alice interactions, with a ``composite'' Bob as described in the proposition, promptly truthful.

Assume that general prompt truthfulness is {\em not} in equilibrium. Call ``Alice'' the {\em last} player who is motivated to deviate from prompt truthfulness. This necessarily means that this is not Alice's last prediction, and that all players who make predictions between Alice's current prediction and her next one (call them {\em Bobbies}) are truthful.

Referring to Corollary \ref{min-truth} and its proof, it is readily seen that Alice maximizes her reward by minimizing the {\em aggregate} score of all Bobbies. Now, with a market scoring rule, in an interval where all players tell the truth, the aggregate score is the difference in score between final public information (public's distribution before Alice speaks again) and initial public information (public's distribution after Alice deviated). The sequence of events that led from initial information to final information has no effect on the aggregate score.

Any one of those Bobbies (call him Bob) has the same initial information, and, since all Bobbies speak the truth, the same final information. If we attribute all the signals of all Bobbies to Bob, we would again have the same initial information and same final information. By Corollary \ref{min-truth}, Alice would tell the truth to minimize Bob's expected aggregate score. She would therefore do the same to minimize the Bobbies' aggregate score, contradicting our assumption that she deviates. This proves that no player deviates from prompt truthfulness.
\qed
\end{proof}

\end{document}